\useunder{\uline}{\ul}{}
\theoremstyle{plain}
\newtheorem{theorem}{\protect\theoremname}
 \theoremstyle{plain}
 \newtheorem{conjecture}{\protect\conjecturename}
 \theoremstyle{plain}
 \theoremstyle{plain}
 \newtheorem{definition}{\protect\definitionname}
  \theoremstyle{definition}
  \newtheorem{lemma}{\protect\lemmaname}
 \theoremstyle{remark}
 \newtheorem{remark}{\protect\remarkname}
  \newtheorem{assumption}{\protect\assumptionname}
\theoremstyle{assumption}
  \theoremstyle{proposition}
\theoremstyle{algorithm}
 \providecommand{\definitionname}{Definition}
 \providecommand{\lemmaname}{Lemma}
 \providecommand{\propositionname}{Proposition}
 \providecommand{\remarkname}{Remark}
\providecommand{\theoremname}{Theorem}
\providecommand{\conjecturename}{Conjecture}
\providecommand{\assumptionname}{Assumption}
\begin{document}
\vspace{-1cm}
 \title{ Whittle's index-based age-of-information minimization in multi-energy harvesting source networks
 \thanks{The authors are with the Department of Electrical Engineering, Indian Institute of Technology, Delhi. Email:  \{akanksha.jaiswal, arpanc\}@ee.iitd.ac.in.   }
 \thanks{A. C. acknowledges support via grant no. GP/2021/ISSC/022 from I-Hub Foundation for Cobotics and grant no. CRG/2022/003707 from Science and Engineering Research Board (SERB), India.}
 \thanks{A preliminary shorter version of this paper is under review in a conference.}
 }
% \thanks{A. C. acknowledges support via the professional development fund and professional development allowance from IIT Delhi, grant no. GP/2021/ISSC/022 from I-Hub Foundation for Cobotics and grant no. CRG/2022/003707 from Science and Engineering Research Board (SERB), India.}}

%\thanks{The preliminary version of this paper appeared in \cite{jaiswal2021minimization}.}}

 \author{
Akanksha Jaiswal, Arpan Chattopadhyay \\
\vspace{-0.8cm}
 }
 
\maketitle

\begin{abstract} 
We consider the problem of source sampling and transmission scheduling for age-of-information minimization in a system consisting of multiple energy harvesting (EH) sources and a sink node. At each time, one of the sources is selected by the scheduler and the quality of its channel to the sink is measured. This probed channel quality is then used to decide whether a source will sample an observation and transmit the packet to the sink in that time slot. We formulate this problem as a constrained Markov decision process (CMDP)   assuming i.i.d. energy arrival and channel fading processes, and relax it using a  Lagrange multiplier. We apply a near optimal Whittle's index policy to decide the node to be probed. Next, for the probed node,  we derive an optimal threshold policy, which recommends source sampling and observation transmission from the probed source only when the measured channel quality is above a threshold. Our proposed policy is called Whittle's index and threshold based source scheduling and sampling  (WITS3) policy. However, in order to calculate Whittle's indices, one must be aware of the underlying processes' transition matrices, which are occasionally concealed from the scheduler. Therefore, we further propose a variant Q-WITS3 of WITS3 based on Q-learning assisted by two timescale asynchronous stochastic approximation,   which seeks to learn Whittle's indices and optimal policies for the case with unknown channel states and EH characteristics. Numerical 
results demonstrate the efficacy of our algorithms over two baseline policies.
\end{abstract}

\begin{IEEEkeywords}
Age-of-information, energy harvesting, multi-armed bandit, Q-learning.
\end{IEEEkeywords}

\vspace{-8pt}
\section{Introduction}\label{section:introduction}
 Many real-time applications of wireless communication systems such as the Internet of Things (IoT) and cyber-physical systems (CPS) require fresh information at the monitoring devices. To this end, the popular Age of Information (AoI) metric  \cite{kaul2012real}   characterizes the freshness of information for status updates and is defined as the time elapsed since the reception of the previous update at the monitoring node. 
 However, two key challenges in such real-time status update systems are battery limits and limited communication bandwidth.  In order to deal with the scarcity of bandwidth, multiple sources often share a common frequency band for transmitting their time-sensitive status updates to a monitoring node.  On the other hand, energy harvesting source (EH) nodes can address the challenge of battery limits. However, smart scheduling policies need to be used to avoid interference among these nodes and to optimally exploit the EH feature available at the source nodes.\\ 
 AoI has gained significant popularity as a network performance metric, especially for ultra-reliable low-latency communication. For example, the papers \cite{bacinoglu2018achieving,ceran2021learning, abd2020aoi, jaiswal2021minimization, jaiswal2023age} have proposed optimal status update policies to minimize AoI for a single EH source under various network settings; especially see  \cite{bacinoglu2018achieving} for age-energy trade-off and optimal threshold policy for AoI minimization, and  \cite{ceran2021learning} for a threshold policy to minimize AoI for a single EH sensor with hybrid automatic repeat request (HARQ) protocol. The authors of  \cite{abd2020aoi} have proposed an age-optimal threshold policy for an RF-powered source node sending status updates to the destination considering the case where the source can either harvest energy or send data at a time. Our paper \cite{jaiswal2021minimization, jaiswal2024age}  formulated the AoI minimization problem as an MDP for a single EH source with channel probing capability and showed that the optimal channel probing and source sampling policies exhibit threshold structure.  Another paper \cite{jaiswal2023age} from us proposed reinforcement learning algorithms to minimize AoI for an EH source working in a non-stationary environment with unknown, time-varying channel statistics and energy arrival rate.\\ 
 On the other hand,   AoI minimization in networks consisting of multiple source nodes contending for a shared channel has been considered too  \cite{zakeri2023minimizing, wang2024scheduling, bedewy2019age,xie2021reinforcement, zhou2019joint, hatami2022demand,tang2020minimizing}. The paper \cite{zakeri2023minimizing} has considered a multi-source relaying system with 
independent sources, random packet generation, buffer-aided transmitter/full-duplex relay, and error-prone links, and minimized the weighted average AoI subject to transmission constraints via CMDP formulation. 
The authors of \cite{bedewy2019age} have considered   AoI minimization over an error-free delay channel. They found that the optimal scheduling and sampling policies can be decoupled. They proposed the Maximum Age First (MAF) scheduling policy and Zero wait sampling policy to minimize the total average peak age, whereas the optimal sampling policy for minimizing the total average age turned out to be a threshold policy. 
The paper \cite{zhou2019joint} has examined the optimal sampling and updating processes for IoT devices under sampling/updating cost,   a block-fading channel, and average energy constraint. For the single source case, they proposed a threshold-type sampling and updating policy. For the multiple source case, they provided an approximate policy via Q-functions. 
The authors in \cite{hatami2022demand} have proposed a relax-then-truncate optimal policy to minimize average on-demand AoI for a resource-constrained IoT network with a cache-enabled edge node between users and EH sensors. The authors of  \cite{tang2020minimizing} have designed a scheduling algorithm to minimize AoI for multi-sensor network with time-varying channel states under bandwidth and power consumption constraints. They decouple the problem across sensors and propose a threshold policy for scheduling of each sensor. The work \cite{ceran2021reinforcement} has addressed the AoI minimization scheduling problem for a system model where a source node sends status updates to multiple users via error-prone channels under average transmission constraints. For known channel statistics, it investigated optimal scheduling policy for both automatic repeat request (ARQ) and hybrid ARQ (HARQ) protocols.\\
%whereas, for the unknown case, it studied different RL techniques (UCRL2, DQN, average-cost SARSA with LFA) and compared them numerically.  }
However, in order to reduce the computational complexity in solving constrained MDPs ({\em i.e.,} CMDPs) for multi-source systems,   Whittle's Index \cite{whittle1988restless}   for restless multi-armed bandits (RMABs)  has been widely used since it provides a near-optimal solution. Many papers have used  Whittle's Index based policy for AoI minimization problems in various multi-source scheduling systems \cite{kadota2016minimizing, sun2019closed, hsu2018age, tang2021whittle, tong2022age, kriouile2021global, tripathi2019whittle}, but without considering energy harvesting sources and channel probing. 
 For example, the paper \cite{kadota2016minimizing} considered a single base station (BS)- multiple client broadcast network.  For the symmetric network case, the authors showed that greedily transmitting a packet with the highest age is optimal, while for the general network case, they established that the problem is indexable and obtained Whittle's index in closed form.
The paper \cite{sun2019closed} has considered   AoI minimization scheduling for multiple source-single sink setup,     derived Whittle’s indices in closed form, and proposed an Index-Prioritized Random Access (IPRA) scheme. The authors of  \cite{tang2021whittle} have developed Whittle's index-based scheduling policy to minimize a non-decreasing cost function of AoI for a multi-user system model where a BS stores stochastically generated packets in buffers and sends it via unreliable channels. 
The paper \cite{kriouile2021global} has studied the average age minimization problem for a system model where only a fraction of the users can transmit data packets simultaneously to a BS over unreliable channels and the BS can decode these packets with some success probabilities. It used a Cauchy criterion to prove the optimality of Whittle’s index policy. 
In \cite{tripathi2019whittle}, Whittle's index approach has been used to find scheduling
policies that minimize nonlinear cost functions of AoI for a single-hop multiple source-single BS network.\\
Information on transition dynamics, which are frequently unknown in advance, is necessary to solve RMABs. The papers \cite{ceran2021reinforcement, wang2023optimistic, mao2020model, akbarzadeh2023learning, xiong2022learning, fu2019towards, DBLP:journals/corr/abs-2004-14427, killian2021q} have proposed various online learning techniques for efficient planning in RMAB scenarios with unknown transitions. For example, the authors in \cite{wang2023optimistic} have proposed a Whittle's index policy-based algorithm named the UCWhittle algorithm which uses the upper confidence bound to learn transition probabilities. Different RL techniques such as UCRL2, DQN, and average-cost SARSA with LFA  have been studied and compared numerically in \cite{ceran2021reinforcement} for the unknown channel statistics. The paper \cite{mao2020model} has considered reinforcement learning in non-stationary systems where the reward and state distributions vary with time such that they do not surpass their respective variation budget. It proposed Restart Q-UCB (Q-learning with UCB) and Double Restart Q-UCB algorithms where the latter doesn't need to know the variation budget in advance.  The authors in \cite{akbarzadeh2023learning} have proposed Thompson-sampling based learning algorithm for a restless bandit system model.  Two index-aware RL algorithms named GM-R2MAB and UC-R2MAB have been presented for infinite-horizon average-reward RMABs in \cite{xiong2022learning}.
The papers \cite{fu2019towards, DBLP:journals/corr/abs-2004-14427, killian2021q} have utilized  Q-learning algorithms to compute the Whittle's indices for RMAB, and especially \cite{DBLP:journals/corr/abs-2004-14427} have proved convergence of their two-timescale stochastic approximation based algorithm under the assumption of indexability. Unlike the literature, our Q-WITS3 algorithm learns the Q-value functions and Whittle's indices for a two-stage action model involving source probing and sampling decisions. Though our previous work \cite{jaiswal2024age} had considered a single source with probing capabilities and proposed a Q-learning algorithm for a two-stage action model,    it did not involve Whittle's index-based decision.    \\
In contrast to the existing literature, our main contributions in this paper are as follows:

 \begin{itemize}    
\item We study the AoI minimization problem in multiple EH source system with i.i.d. energy arrival process and i.i.d. channel at each source.    
At a time,   two decisions need to be made optimally: (i) which source to select for channel probing, and  (ii) whether to sample and transmit the status update from the selected source based on the channel quality.   We formulate this problem as a CMDP where only one source can be scheduled at a time. However, a key challenge here is the two-stage action model-probing and sampling.  By using the Lagrangian relaxation method, we convert the CMDP   to an unconstrained MDP and further decouple this optimization problem across sources. For the unconstrained MDP subproblems, we propose a Whittle's index and threshold based source scheduling and sampling policy (WITS3 policy) where, at each time, the source node having the highest Whittle's index is probed, and sampled based on the channel probe outcome. We prove that, for each decoupled MDP sub-problem, 
the optimal sampling policy amounts to checking whether the probed channel quality is above a threshold.

\item We propose the Q-WITS3 algorithm that computes the optimal source scheduling and sampling policies for the unconstrained MDP subproblems of a multiple-source system with unknown channel statistics and energy harvesting characteristics. In contrast to the standard Q-learning algorithm which takes a single action at each time instant, we basically combine  WITS3 and Q-learning that seeks to learn the Q-values and Whittle's indices under the two-stage action model. It is important to note that the algorithm involves two timescale asynchronous stochastic approximation; the Q-values are updated in the faster timescale and the Whittle's indices are updated at a slower timescale.
\end{itemize} 

%\subsection{Organization}\label{section:Organization}
%The rest of the paper is organized as follows. The system model is presented in Section \ref{section:system-model}.
%The formulation of the AoI minimization problem and its proposed solution is given in Section \ref{section:Problem formulation and proposed policy}. In Section \ref{section:Simulation results}, simulation results are provided. Finally, conclusions are drawn at Section \ref{section:Conclusion}.

\vspace{-6pt}

\section{System model}\label{section:system-model} 
We consider a multiple source single sink sensing network that consists of $N$  EH sources measuring $N$ processes and sending the observation packets to a sink node over a fading channel, as in Fig.~\ref{multis-system-model}. Time is discretized such that  $t \in \{0, 1,2,3, \cdots\}$. At each time,  only one out of the $N$ source nodes can be probed to estimate the quality of its channel to the sink.  After probing, the selected source node can further take decision on whether to sample a process and transmit the data packet to the sink, based on the measured channel quality. We assume that probing requires a negligible amount of energy since the data packets that need to be transmitted are usually very large in size. However, if a source samples the process and communicates the packet to the sink, $E_s$ units of energy are used.

The source nodes harvest energy from the environment. Each source~$i \in \{ 1,2,3, \cdots N \}$ is equipped with a finite energy buffer of size $B_i$ units.   If the   $i$-th source node already has $B_i$ units of energy, the newly generated energy packets will be lost and not stored in the energy buffer of source $i$ unless that source spends some energy in sensing and communication.
At time~$t$, $A_{i}(t)$ denotes  the number of energy packet arrivals at source node $i$, and $E_{i}(t) \in \{0,1,\cdots, B_{i}\}$ denotes the available energy at source   $i$. We model   the energy packet generation process $\{A_{i}(t)\}_{t \geq 0}$ for the $i$-th source   as an i.i.d.  Bernoulli process with known mean $\lambda_i>0$; i.e., $\mathbb{E}(A_{i}(t))=\lambda_{i}$.  % \textcolor{blue}{see %\cite{hatami2021aoi, gindullina2021age, leng2019age, chen2019age} for reference}. 

\begin{figure}[t!]
  \begin{center}
 \includegraphics[height=3.8cm,width=7cm]{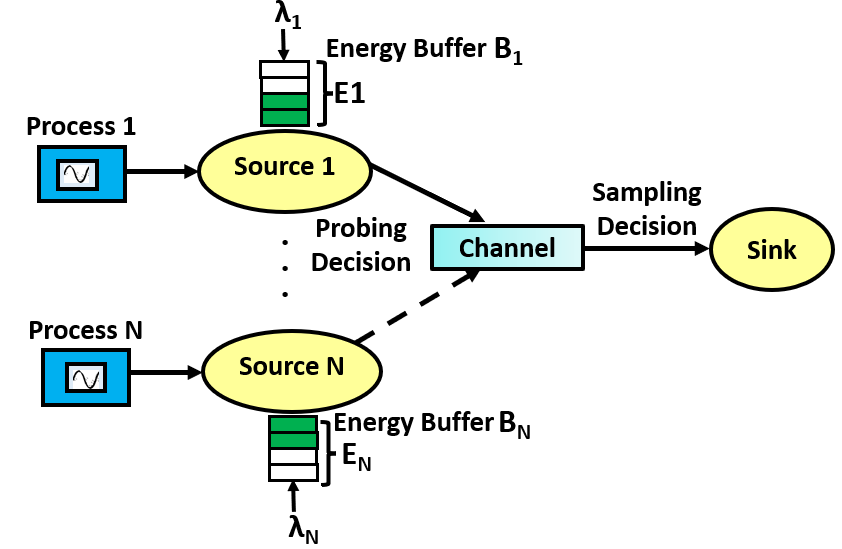}
 \caption{Remote sensing system with multiple EH sources.}
 \label{multis-system-model}
 \end{center}
 \vspace{-5mm}
\end{figure}

A fading channel is considered between the $i$-th source and the sink node. The term $C(i, t)$ is used to denote the channel state between the $i$-th source and the sink at time~$t$, and we assume that $C(i, t) \in \{C_1, C_2,\cdots, C_m\}$, where $m$ is the finite number of channel states. Also,  $p(i,t)$ denotes  the  successful packet transmission probability from the $i$-th source to the sink node at time~$t$, and it belongs to the set $p(i,t) \in \{p_1, p_2, \cdots, p_m\}$, $\forall i$. The term  $r_{i}(t) \in \{0,1\}$ is an indicator of successful packet transmission from the $i$-th source to the sink at time~$t$. Therefore,  $\mathbb{P}(r_{i}(t)=1|C(i, t)=C_j)=p_j$ and $\mathbb{P}(p(i,t)=p_j|C(i, t)=C_j)=1$ for all $j \in \{1,2,\cdots,m\}$ and $i \in \{1,2,\cdots, N\}$.  Here, we assume that $\{C(i, t)\}_{t \geq 0}$ is i.i.d. across $t$ for each $1 \leq i \leq N$, and that the channel states across different nodes are independent. Let $q_{j,i} \doteq \mathbb{P}(C(i, \cdot)=C_j), 1 \leq j \leq m$, denote the channel state distribution of source $i$ to the sink node. It is assumed that, at each time~$t$, only one out of $N$ source-sink channels is probed to find its state $C(i,t)$.

At each time~$t$, let us denote by $b_{i}(t)$   the indicator of selecting source $i$ to probe the channel, thus   $b_{i}(t)\in \{0,1\}$,  $\forall i$. In order to avoid a collision,  only one source is selected to probe the channel (and consequently decide whether to sample and transmit status update or not). This induces the constraint $\sum_{i=1}^N  b_{i}(t) \leq 1 ,   \forall t \geq 1$.

Also, for each source $i$, let $a_{i}(t) \in \{0,1\}$ denote the action of deciding whether to sample and communicate the data packet from the $i$-th source to the sink node (action $a_{i}(t)=1$), or to remain idle (action $a_{i}(t)=0$). Thus, at any time $t$, if $b_{i}(t)=0$, then  $a_{i}(t)=0$ as well;  this means that any source whose channel is not probed does not transmit. On the other hand, if $b_i(t)=1$, then $a_i(t) \in \{0,1\}$. Thus, the action space of source $i$ is given by  $\mathcal{A}_{i}=\{(0,0), (1,0), (1,1)\}$, where a generic action for the $i$-th source at time~$t$ is denoted by $(b_{i}(t),a_{i}(t))$. For $a_{i}(t)=1$,  $E_{s}$ energy units from source $i$ are used to sample and transmit status update. At any time $t$, if the available buffer energy for the $i$-th source is less than that required for sampling, i.e., if $E_{i}< E_s$, then the $i$-th source cannot be selected for channel probing because it does not have enough energy to sample a process and transmit.

Let us define $\kappa_{i}(t)\doteq \sup\{0 \leq \kappa < t: a_{i}(\kappa)=1, r_{i}(\kappa)=1\}$ be the last time instant before time~$t$, when the $i$-th source had successfully sent an observation packet to the sink. The AoI for the $i$-th source at time~$t$ is  given by $K_i(t)=(t-\kappa_i(t))$. Nevertheless, if $a_{i}(t)=1$ and $r_{i}(t)=1$, then $K_i(t)=0$ because the current status update of the $i$-th source is available to the sink node.

Our goal is to find a  stationary source scheduling and sampling policy $\bm{\pi}$ that minimizes the expected sum AoI across all the sources averaged over time, under  the source scheduling constraint for probing at each time:

\begin{align}\label{eqn:main-problem}
\footnotesize
 \min_{\bm{\pi}}  \lim_{T \rightarrow \infty}\frac{1}{TN} \sum_{t=1}^T \sum_{i=1}^N \mathbb{E}_{\bm{\pi}} (K_i(t))\nonumber\\
 s.t. \hspace{0.2 cm } \sum_{i=1}^N  b_{i}(t) \leq 1, \hspace{0.6 cm } \forall t \geq 1
 \normalsize
\end{align}

%\vspace{-2.5cm}

% Please add the following required packages to your document preamble:

\section{Problem formulation and proposed policy} \label{section:Problem formulation and proposed policy}
 Here, we derive the optimal policy for channel probing and source sampling and communication for multiple  EH sources. It is important to note that we assume the existence of a centralized scheduler that has access to the energy level and age of all nodes at each time instant, based on which it decides which source node will probe its channel quality to the sink.

 We formulate  problem~\eqref{eqn:main-problem} as a long-run average cost MDP. At time $t$, let us denote by $s_{i} \in  \mathcal{S}_{i}$ a   generic state of the  $i$-th source,  which is given by  $s_{i} =(E_{i}, K_{i})$. Here,     $E_{i} \in \{0, 1, . . .,B_{i}\}$ is the available buffer energy of source $i$, and  $K_{i}$ is the AoI of source node $i$. Thus $\mathcal{S}_{i} \doteq \{0,1,\cdots,B_i\} \times \mathbb{Z}_+$ is the   state space for source~$i$. At any time $t$, if the $i$-th source is selected to probe the channel (i.e., $b_i(t)=1$)), it encounters a generic intermediate state $v_{i} = (E_{i},K_{i},C(i, \cdot))$ with the corresponding state space $\mathcal{V}_{i}=\{0, 1,\cdots,B_i\}\times \mathbb{Z}_{+} \times \{C_{1}, C_{2},\cdots, C_{m}\}$; this generic intermediate state   additionally contains   the current channel state $C(i, \cdot)$ obtained by probing the $i$-th source node. The system state at any time $t$ is defined as $\bm{s}(t) = (s_{1}(t), . . . , s_{N}(t)) \in \mathcal{S}$ where $\mathcal{S} = \mathcal{S}_{1} \times  \cdots, \times \mathcal{S}_{N}$. The centralized scheduler knows $\bm{s}(t)$ at time~$t$, and uses this information to compute the probing decision $\{b_i(t)\}_{1 \leq i \leq N}$ such that $\sum_{i=1}^N b_i(t)=1$. Then a source $i$ with $b_i(t)=1$ chooses  $a_i(t) \in \{0,1\}$ and acts accordingly.

\subsection{MDP Formulation } \label{section:single-sensor-single-process_stationary}
 As mentioned earlier, we use the MDP theory to formulate the AoI minimization problem. At time~$t$,  if $b_{i}(t) = 0, a_{i}(t) = 0$,   the expected single-stage AoI cost for source node  $i$ is $c_i(s_{i}(t), b_{i}(t)=0, a_{i}(t)=0) = K_{i}$. However, at any time $t$, if the $i$-th source node is selected to probe the channel state, then it encounters an intermediate state $v_i(t)$ that contains the channel state $C(i,t)$ from source~$i$ to the sink. Then, the expected single-stage AoI cost for source node $i$, conditioned on $C(i,t)$, is given by $c_i(s_{i}(t), b_{i}(t)=1, a_{i}(t)=0) = K_{i}$, and $c_i(s_{i}(t), b_{i}(t)=1, a_{i}(t)=1) = K_{i}(1-p(C(i, t))$, where the expectation is taken over $r_i(t)$ conditioned on $C(i,t)$.

 The $i$-th source state transition probability is denoted by $\mathbb{P}(s_{i}(t+1)| s_{i}(t), b_{i}(t), a_{i}(t))$ which gives the probability of state transition from state $s_{i}=(E_{i}, K_{i})$ under   action $ (b_{i}, a_{i})$   to next state $ s^'_{i}=(E^'_{i}, K^'_{i})$  at time $t+1$ and is defined by:

\footnotesize
\begin{eqnarray}\label{eqn:state_transition_1}
&&\mathbb{P}(s^'_{i}|s_{i}=(E_{i}\gtreqless E_{s}, K_{i}), b_{i}=0, a_{i}=0) \nonumber\\
&=&
 \begin{cases}
 \lambda_{i}, \hspace{1 cm} s^'_{i}= (E^'_{i}=\min \{E_{i}+1,B_i\}, K^'_{i}= K_{i}+1 )  \\ \hspace{4.6cm}  \cdots \bm{\text{Case } 1} \\
 (1- \lambda_{i}),  \hspace{0.3cm} s^'_{i}=(E^'_{i}= E_{i},  K^'_{i}=K_{i}+1)  \\ \hspace{2.7cm}  \hspace{2.1cm}\cdots \bm{\text{Case } 2} \\
 0,  \hspace{2.1cm}\text{otherwise} \hspace{1.1cm}\cdots \bm{\text{Case } 3}
 \end{cases}
\end{eqnarray}
\normalsize

\footnotesize
\begin{eqnarray}\label{eqn:state_transition_2}
&&\mathbb{P}(s^'_{i}|s_{i}=(E_{i}\geq E_{s}, K_{i}), b_{i}=1, a_{i}=0) \nonumber\\
&=&
 \begin{cases}
  q_{j,i}\lambda_{i}, \hspace{0.8cm} s^'_{i}= (E^'_{i}=\min \{E_{i}+1,B_i\}, K^'_{i}= K_{i}+1 ) \\\hspace{5.1cm} \cdots  \bm{\text{Case } 4} \\
  q_{j,i} (1- \lambda_{i}),  \hspace{0.1cm} s^'_{i}=(E^'_{i}= E_{i},  K^'_{i}=K_{i}+1)  \\ \hspace{2.7cm}  \hspace{2.4cm}\cdots \bm{\text{Case } 5} \\
 0,  \hspace{2.2cm}\text{otherwise} \hspace{1.1cm}\cdots \bm{\text{Case } 6}
 \end{cases}
\end{eqnarray}
\normalsize

\footnotesize
\begin{eqnarray}\label{eqn:state_transition_3}
&&\mathbb{P}(s^'_{i}|s_{i}=(E_{i} \geq E_{s}, K_{i}), b_{i}=1, a_{i}=1) \nonumber\\
&=&
 \begin{cases}
 q_{j,i}\lambda_{i}p_{j}, \hspace{0.8cm} s^'_{i}= (E^'_{i}=\min \{E_{i}-E_{s}+1,B_i\},K^'_{i}= 1) \\ \hspace{3cm}  \hspace{2.8cm}\cdots \bm{\text{Case } 7} \\
 q_{j,i}(1-\lambda_{i})p_{j}, \hspace{0.1cm} s^'_{i}= (E^'_{i}=E_{i}-E_{s}, K^'_{i}= 1) \cdots \bm{\text{Case } 8} \\
 q_{j,i} \lambda_{i}(1-p_{j}),  \hspace{0.1cm} s^'_{i}=(E^'_{i}=\min \{E_{i}-E_{s}+1,B_i\}, \\ \hspace{2.8cm}K^'_{i}=K_{i}+1)
  \hspace{1.2cm} \cdots\bm{\text{Case } 9} \\
  q_{j,i} (1-\lambda_{i})(1-p_{j}),  \hspace{0.1cm} s^'_{i}=(E^'_{i}=E_{i}-E_{s}, K^'_{i}=K_{i}+1 )\\
   \hspace{5.5cm} \cdots\bm{\text{Case } 10}  \\
 0,  \hspace{2.9cm}\text{otherwise} \hspace{1.2cm}\cdots \bm{\text{Case } 11}
 \end{cases}
\end{eqnarray}
\normalsize

The above equations \eqref{eqn:state_transition_1}-\eqref{eqn:state_transition_3} follow from the independence of the energy harvesting process and channel statistics of source $i$ from other source nodes. Note that, in \eqref{eqn:state_transition_2}-\eqref{eqn:state_transition_3} when  $b_{i}=1$, the $i$-th source will encounter intermediate state $(E_{i}, K_{i}, C(i, \cdot))$ and the distribution of the next state will depend on the channel state occurrence probabilities $\{q_{j,i}\}_{1 \leq j \leq m}$, packet success probabilities $\{p_j\}_{1 \leq j \leq m}$   and energy arrival rate $\lambda_i$. The overall system state transition probability   can be factorized as: 

\footnotesize
\begin{eqnarray}\label{eqn:state_transition_prod}
\mathbb{P}(\bm{s}(t+1)| \bm{s}(t), \bm{b}(t), \bm{a}(t)) 
= \prod_{i=1}^N \mathbb{P}(s_{i}(t+1)| s_{i}(t), b_{i}(t), a_{i}(t))  
\end{eqnarray}
\normalsize

where, \eqref{eqn:state_transition_prod} follows from the fact that the state transition is independent across source nodes, given $\{b_i(t),a_i(t)\}_{1 \leq i \leq N}$.
 
Now we can write \eqref{eqn:main-problem} as follows:

 \begin{align}\label{eqn:main-problem-dis}
\footnotesize
 \min_{\bm{\pi}}  \lim_{T \rightarrow \infty}\frac{1}{TN} \sum_{t=1}^T \sum_{i=1}^N  \mathbb{E}_{\bm{\pi}} (c_{i}(s_{i}(t), b_{i}(t),a_{i}(t)))\nonumber\\
 s.t. \hspace{0.2 cm } \sum_{i=1}^N  b_{i}(t) \leq 1, \hspace{0.3 cm } \forall t \geq 1
 \normalsize
\end{align}

\subsection{Constrained MDP, Lagrange Multiplier and Whittle Index} \label{section:multiple-sensor-CMDP-stationary}
We relax the constraint in \eqref{eqn:main-problem-dis}    such that the hard scheduling  constraint at each time is replaced by a  soft constraint on the time-averaged mean number of channel probes:

\begin{align}\label{eqn:main-problem_CMDP}
\footnotesize
 \min_{\bm{\pi}}  \lim_{T \rightarrow \infty}\frac{1}{TN} \sum_{t=1}^T \sum_{i=1}^N  \mathbb{E}_{\bm{\pi}} (c_{i}(s_{i}(t), b_{i}(t),a_{i}(t)))\nonumber\\
 s.t. \hspace{0.2 cm } \frac{1}{TN} \sum_{t=1}^T\sum_{i=1}^N \mathbb{E}_{\bm{\pi}} [ b_{i}(t)] \leq  \frac{1}{N} \hspace{0.3 cm } 
 \normalsize
\end{align}

Next, we use the Lagrangian relaxation method with a Lagrange multiplier $\hat{\mu}$ to convert  the above CMDP into an unconstrained MDP and minimize the following Lagrangian:

\footnotesize
\begin{eqnarray}\label{eqn:Lagrange_CMDP_policy}
 \mathcal{L}(\bm{\pi},\hat{\mu})  &=& \lim_{T \rightarrow \infty}\frac{1}{TN} \sum_{t=1}^T \sum_{i=1}^N  \bigg( \mathbb{E}_{\bm{\pi}}  [c_{i}(s_{i}(t), b_{i}(t),a_{i}(t)]+\nonumber\\ 
 &&\hat{\mu}  \mathbb{E}_{\bm{\pi}}  [b_{i}(t)] \bigg) 
\end{eqnarray}
\normalsize

The Lagrangian dual function for any $\hat{\mu} \geq 0$ is  defined by $\mathcal{L}^*(\hat{\mu})= \min_{\bm{\pi}} \mathcal{L}(\bm{\pi},\hat{\mu})$ and is achieved by using the $\hat{\mu}$-optimal policy    $\bm{\pi}^*_{\hat{\mu}}= \arg \min_{\bm{\pi}}\mathcal{L}(\bm{\pi},\hat{\mu})$. Interestingly, 
this problem of obtaining an $\hat{\mu}$-optimal policy $\bm{\pi}^*_{\hat{\mu}}$ can be decoupled into $N$ sub-problems across all sources. Thus, the Lagrangian in \eqref{eqn:Lagrange_CMDP_policy} can be decomposed as:

\footnotesize
\begin{eqnarray}\label{eqn:Lagrange_alternate_CMDP_policy}
 \mathcal{L}(\bm{\pi},\hat{\mu})  &=&  \frac{1}{N}  \sum_{i=1}^N \mathcal{L}_i(\bm{\pi}_i,\hat{\mu}) 
\end{eqnarray}
\normalsize

where, $\mathcal{L}_i(\bm{\pi}_i,\hat{\mu})$ is the Lagrangian for $i$-th sub-problem of source $i$ and is given by:

\footnotesize
\begin{eqnarray}\label{eqn:Lagrange_persource_CMDP_policy}
  \mathcal{L}_i(\bm{\pi}_i,\hat{\mu}) &=& \lim_{T \rightarrow \infty}\frac{1}{T} \sum_{t=1}^T  \mathbb{E}_{\bm{\pi}_i} \bigg( c_{i}(s_{i}(t), b_{i}(t),a_{i}(t)]+\nonumber\\ 
 &&\hat{\mu}  \mathbb{E}_{\bm{\pi}_i}  [b_{i}(t)] \bigg),  \hspace{0.6 cm}  \forall i \in \{1, 2, \cdots, N\}
\end{eqnarray}
\normalsize

Here $\bm{\pi}_i$ is the policy used for the $i$-th source. Obviously, the sub-problem for $i$-th source is an unconstrained average cost MDP. This unconstrained average cost MDP is solved from the solution of an $\alpha$-discounted cost  MDP problem \cite[Section 4.1]{bertsekas2011dynamic} in the regime $\alpha \uparrow 1$.  

 %\begin{align}\label{eqn:main-problem-discounted}
%\footnotesize
% \min_{\bm{\pi}}  \lim_{T \rightarrow \infty}\frac{1}{TN} \sum_{t=1}^T \sum_{i=1}^N \alpha^t \mathbb{E}_{\bm{\pi}} (c_{i}(s_{i}(t), b_{i}(t),a_{i}(t)))\nonumber\\
% s.t. \hspace{0.2 cm } \sum_{i=1}^N  b_{i}(t) \leq 1 \hspace{0.3 cm } \forall t
% \normalsize
%\end{align}

\subsubsection{Bellman Equation}
Let $J_{i}^{*}(E_i,K_i)$ denote the optimal value function for state $(E_i,K_i)$ of source $i$ in the discounted cost problem, and let $W_{i}^*(E_i,K_i,C(i,\cdot))$ be the cost-to-go from an intermediate state $(E_i,K_i,C(i, \cdot))$, where $C(i,\cdot) \in \{C_1,C_2,\cdots,C_m\}$. The Bellman equations \cite[Proposition 7.2.1]{bertsekas2005dynamic}  for the  $\alpha$-discounted cost MDP problem for each $i$-th source are given by \eqref{eqn:Bellman-eqn-multiple-source-with-fading-general} which are solved by standard value iteration technique \cite[Proposition 7.3.1]{bertsekas2005dynamic}.

\scriptsize
\begin{eqnarray} \label{eqn:Bellman-eqn-multiple-source-with-fading-general}
J^{*}_{i}(E_{i} \geq E_{s},K_i)&=& \min \bigg\{u_i(E_i,K_i), v_i(E_i,K_i) \bigg\}\nonumber\\
u_i(E_i,K_i) &=& K_i+\alpha \mathbb{E}_{A_i}J_i^{*}(\min\{E_i+A_i,B_i\},K_i+1), \nonumber\\
v_{i}(E_i,K_i)&=& \hat{\mu}+\sum_{j=1}^m q_{j,i} W^{*}_{i}(E_i,K_{i},C_{j})\nonumber\\ 
W^{*}_{i}(E_i,K_i,C(i, \cdot))&=& \min\{K_i+\alpha \mathbb{E}_{A_i}J^{*}_{i}(\min\{E_i+A_i,B_i\},K_i+1),\nonumber\\ 
&&K_i(1-p(C(i, \cdot)))+\alpha p(C(i, \cdot))\mathbb{E}_{A_i}J^{*}_{i}(\min\{E_i\nonumber\\
&&-E_{s}+A_i,B_i\},1)+\alpha (1-p(C(i, \cdot)))\mathbb{E}_{A_i}J^{*}_{i}(\nonumber\\
&&\min\{E_i-E_{s}+A_i,B_i\},K_{i}+1)\}\nonumber\\
J^{*}_{i}(E_i< E_{s},K_i)&=& K_i+\alpha \mathbb{E}_{A_i}J^{*}_{i}(\min\{E_i+A_i,B_i\},K_i+1)
\end{eqnarray}
\normalsize

Here  the term $u_i(E_i, K_i)$ in \eqref{eqn:Bellman-eqn-multiple-source-with-fading-general} is the cost of not selecting the $i$-th source to probe channel state ($b_{i}(t)=0$), which comprises single-stage AoI cost $K_i$ and an $\alpha$ discounted future cost for a random next state $(\min\{E_i+A_i,B_i\},K_i+1)$,  averaged over $A_i$. The quantity $v_{i}(E_i, K_i)$ is the expected cost of selecting source $i$ to probe the channel state which includes the penalty cost $\hat{\mu}$ for choosing the $i$-th source for probing. After probing the $i$-th source, it encounters an intermediate state $(E_i,K_i,C(i, \cdot))$, for which if $a_{i}(t)=0$, a single stage sampling cost $K_i$ is induced and the next state becomes $(\min\{E_i+A_i,B_i\},K_i+1)$; and if $a_{i}(t)=1$, the expected sampling cost is $K_{i}(1-p(C(i, \cdot)))$ (where the expectation is calculated using the packet success probability $p(C(i, \cdot))$) of source $i$ for the observed channel state $C(i, \cdot)$, and the random next state becomes $(min\{E_{i}-E_s+A_i,B_i\},1)$ and $(min\{E_{i}-E_s+A_i,B_i\},K_{i}+1)$ if $r_{i}(t)=1$ and $r_{i}(t)=0$, respectively. The last equation in \eqref{eqn:Bellman-eqn-multiple-source-with-fading-general} follows similarly since when $E_i<E_s$, $(b_{i}(t)=0, a_{i}(t)=0)$ is the only feasible action due to lack of energy. Substituting the value of $u_{i}(E_i,K_i)$ and  $v_{i}(E_i,K_i)$ in the first equation of \eqref{eqn:Bellman-eqn-multiple-source-with-fading-general}, we get:

 \scriptsize
\begin{eqnarray}\label{eqn:Bellman-eqn-single-sensor-single-process-with-fading}
&&J^{*}_i(E_i \geq E_{s},K_i)\nonumber\\
&=&\min \bigg\{K_i+\alpha \mathbb{E}_{A_i}J^{*}_i(\min\{E_i+A_i,B_i\},K_i+1),\hat{\mu}+\mathbb{E}_{C(i, \cdot)} \bigg( \min\{\nonumber\\ 
&&K_i+\alpha \mathbb{E}_{A_i}J^{*}_{i}(\min\{E_i+A_i,B_i\},K_i+1),K_i(1-p(C(i, \cdot)))+\nonumber\\ 
&&\alpha p(C(i, \cdot))\mathbb{E}_{A_i}J^{*}_{i}(\min\{E_i-E_{s}+A_i,B_i\},1)+\nonumber\\
&&\alpha (1-p(C(i, \cdot)))\mathbb{E}_{A_i}J^{*}_{i}(\min\{E_i-E_{s}+A_i,B_i\},K_{i}+1)\} \bigg) \bigg\} \nonumber\\
&&J^{*}_i(E_i < E_{s},K_i)\nonumber\\
&=&K_i+\alpha \mathbb{E}_{A_i}J^{*}_i(min\{E_i+A_i,B_i\},K_i+1) 
\end{eqnarray}
\normalsize

For the $i$-th sub-problem with optimal policy $\pi_{i}^*$, let us denote by $\mathcal{I}(\hat{\mu})$   the set of states for which it is optimal to not probe the $i$-th source, when the source probing charge is $\hat{\mu}$,  i.e., $ \mathcal{I}(\hat{\mu})= \{(E_i, K_i) \in \{0,1, \cdots, B_i\} \times \mathbb{Z}^+: u_i(E_i, K_i)< v_i(E_i, K_i)\}$. 

\begin{definition}
    
 The sub-problem corresponding the $i$-th source is indexable, if $\mathcal{I}(\hat{\mu})$ increases monotonically form $\emptyset$ to the entire space as $\hat{\mu}$  increases from $0$ to $\infty$. The AoI minimization problem is indexable if all the $N$  sub-problems are indexable. 
\end{definition}

\begin{conjecture}\label{conjecture:multiple-source-WI}
 The indexability properly holds for all the sub-problems. Let us denote by $\mathcal{WI}_i(E_i, K_i)$   the Whittle index for state $(E_i, K_i)$ of source $i$ which is the infimum charge $\hat{\mu}$ that makes probing and not probing decisions of source $i$ equally desirable in state $(E_i, K_i)$. 
\end{conjecture}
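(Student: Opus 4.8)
The plan is to prove indexability in its equivalent monotone form: that the passive (do-not-probe) set $\mathcal{I}(\hat\mu)$ grows monotonically with the charge $\hat\mu$, so that for every state the charges at which not-probing is optimal form an interval $[\mathcal{WI}_i(E_i,K_i),\infty)$ whose left endpoint is the Whittle index. The first step is a simplification of the probing cost in \eqref{eqn:Bellman-eqn-multiple-source-with-fading-general}: the idle branch of $W_i^*$ is identical to $u_i(E_i,K_i)$, so writing $g_{j,i}(E_i,K_i)$ for the sample-and-transmit branch under channel $C_j$ we have $W_i^*(E_i,K_i,C_j)=\min\{u_i(E_i,K_i),g_{j,i}(E_i,K_i)\}$, and therefore
\begin{equation*}
v_i(E_i,K_i)-u_i(E_i,K_i)=\hat\mu-\sum_{j=1}^m q_{j,i}\,\bigl(u_i(E_i,K_i)-g_{j,i}(E_i,K_i)\bigr)^{+}.
\end{equation*}
Hence $(E_i,K_i)\in\mathcal{I}(\hat\mu)$ exactly when $\hat\mu>\Phi_i(E_i,K_i;\hat\mu)$, where $\Phi_i:=\sum_j q_{j,i}(u_i-g_{j,i})^{+}\ge 0$ is the expected gain from probing. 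Indexability is thus equivalent to the single-crossing property that $h_i(\hat\mu):=\hat\mu-\Phi_i(E_i,K_i;\hat\mu)$ passes from negative to positive at most once, and it suffices to show that $h_i$ is non-decreasing in $\hat\mu$.

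Next I would record the structural properties needed to control $\Phi_i$. By induction on the value-iteration operator associated with \eqref{eqn:Bellman-eqn-single-sensor-single-process-with-fading}, for each fixed $\hat\mu$ the value function $J_i^*(\cdot;\hat\mu)$ is non-decreasing in the age $K_i$ and non-increasing in the energy $E_i$; these orders are preserved by the Bellman map because the stage cost $K_i$ and the transition kernels in \eqref{eqn:state_transition_1}--\eqref{eqn:state_transition_3} are monotone, and they also recover the threshold structure of the sampling decision. For the dependence on the charge, note that $\hat\mu$ enters \eqref{eqn:Lagrange_persource_CMDP_policy} linearly as the coefficient of the probing indicator, so $J_i^*(E_i,K_i;\hat\mu)=\min_{\pi_i}\{(\text{discounted AoI cost})+\hat\mu\,(\text{discounted probe count})\}$ is a pointwise infimum of affine functions of $\hat\mu$. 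Consequently $J_i^*$ is concave and non-decreasing in $\hat\mu$, and by the envelope theorem $\partial J_i^*/\partial\hat\mu=N_i(E_i,K_i;\hat\mu)$, the discounted expected number of probes under the optimal policy, with $0\le N_i\le 1/(1-\alpha)$.

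Differentiating $\Phi_i$ and substituting the envelope identity, with $E_i^{\circ}:=\min\{E_i+A_i,B_i\}$ and $E_i^{\bullet}:=\min\{E_i-E_s+A_i,B_i\}$ denoting the idle and post-sampling energy levels, yields on the active set
\footnotesize
\begin{align*}
\frac{\partial\Phi_i}{\partial\hat\mu}=\alpha\!\!\sum_{j:\,u_i>g_{j,i}}\!\! q_{j,i}\,\mathbb{E}_{A_i}\big[\,& N_i(E_i^{\circ},K_i+1)-p_j N_i(E_i^{\bullet},1)\\ &{}-(1-p_j)N_i(E_i^{\bullet},K_i+1)\,\big].
\end{align*}
\normalsize
The single-crossing property --- and with it indexability together with the identification of $\mathcal{WI}_i(E_i,K_i)$ as the crossing point of $h_i$ --- follows the moment this quantity is shown to be at most $1$. \textbf{This bound is the main obstacle.} The crude estimate $N_i\le 1/(1-\alpha)$ only gives $\partial\Phi_i/\partial\hat\mu\le \alpha/(1-\alpha)$, which exceeds $1$ as $\alpha\uparrow 1$; one must instead argue that the difference of discounted probing counts between the idle continuation from $(E_i^{\circ},K_i+1)$ and the energy-depleted sampling continuations from $E_i^{\bullet}$ telescopes to at most a single probe. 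I expect this to require a coupling of the two continuation chains, which differ only by the expended energy $E_s$ and the age reset, combined with the monotonicity of $N_i$ in $(E_i,K_i)$ from the previous step; because the threshold set defining the active region itself moves with $\hat\mu$, the difference of counts need not be monotone, and this is precisely why the claim is stated as a conjecture and corroborated numerically. Finally, having obtained indexability for every $\alpha<1$, I would transfer it to the average-cost sub-problem through the vanishing-discount limit $\alpha\uparrow 1$ invoked in Section~\ref{section:multiple-sensor-CMDP-stationary}.
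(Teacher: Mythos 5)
The first thing to note is that the paper does not prove this statement: it is deliberately labelled a \emph{conjecture}, is supported only by the numerical observations in Section~\ref{section:Simulation results}, and the accompanying Remark explicitly says that the two-dimensional state $(E_i,K_i)$ of an EH source blocks the indexability arguments of \cite{kadota2016minimizing, kriouile2021global}. So there is no paper proof to compare against; the question is only whether your attempt closes the conjecture. It does not, and you say so yourself. Your reduction up to the obstacle is sound and worth recording: the identity $W_i^*(E_i,K_i,C_j)=\min\{u_i,g_{j,i}\}$ and hence $v_i-u_i=\hat\mu-\sum_j q_{j,i}(u_i-g_{j,i})^{+}$ is correct, the equivalence of indexability with single-crossing of $h_i(\hat\mu)=\hat\mu-\Phi_i$ is correct, and the concavity of $J_i^*$ in $\hat\mu$ with superdifferential equal to the discounted probe count is the standard Lagrangian-MDP fact. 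This is a genuinely sharper formulation of what must be shown than anything in the paper.

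The gap is exactly where you flag it, and it is not a technicality --- it is the whole conjecture. You need $\partial\Phi_i/\partial\hat\mu\le 1$, i.e.\ that the probing-gain $\Phi_i$ grows more slowly than the charge, and the only bound you actually establish is $\alpha/(1-\alpha)$, which is vacuous precisely in the regime $\alpha\uparrow 1$ that the average-cost problem requires. The proposed rescue --- a coupling showing the difference of discounted probe counts between the idle continuation from $(E_i^{\circ},K_i+1)$ and the sampling continuations from $E_i^{\bullet}$ telescopes to at most one probe --- is not carried out, and there are concrete reasons to doubt it goes through as stated: the two chains differ in \emph{both} coordinates (energy by $E_s$, age by a reset), the monotonicity of $N_i$ in $(E_i,K_i)$ that the coupling leans on is itself unproven and depends on the optimal policy at charge $\hat\mu$, and the active set $\{j:u_i>g_{j,i}\}$ over which you sum moves with $\hat\mu$, so the pieces being coupled change identity as the charge varies. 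Bounding such a difference of occupation measures under an endogenously varying policy is essentially equivalent to indexability itself in most known proofs. A final smaller issue: even granting indexability for every $\alpha<1$, transferring it to the average-cost sub-problem via the vanishing-discount limit requires uniform control of the crossing points $\mathcal{WI}_i^{\alpha}(E_i,K_i)$ as $\alpha\uparrow 1$, which you assert but do not argue. The statement therefore remains a conjecture after your attempt, though the single-crossing reformulation of $h_i$ is a useful first step toward settling it.
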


\begin{remark}
    Some papers  \cite{kadota2016minimizing, sun2019closed, tong2022age, kriouile2021global, tripathi2019whittle}, in the AoI literature have proved indexability and used Whittle's index for source selection. However, for EH sources, the state of each node~$i$ is a tuple $(E_i, K_i)$, which prohibits the direct application of the techniques used in  \cite{kadota2016minimizing,  kriouile2021global}, to prove indexability in case if one-dimensional state.
\end{remark}

\subsubsection{Policy structure}\label{subsubsection:Policy structure-IID-System-N=1} \hfill\\\\
\noindent {\bf Probing Policy: }
The scheduler first computes the Whittle's index $\mathcal{WI}_i(E_i, K_i)$ for each source $i$, and then selects the source with the highest Whittle index for channel probing.

\begin{conjecture}\label{conjecture:single-sensor-single-process-with-fading-policy-structure}
 At any time $t$, the optimal policy for the $\alpha$-discounted  cost problem is to select for channel probing the $i^*$-th source node with the highest Whittle's index such that $E_{i^*} \geq E_s$, i.e., $ i^*(t)=\arg \max_{1 \leq i \leq N} \mathcal{WI}_i(t)$. 
\end{conjecture}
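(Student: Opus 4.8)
The plan is to establish the index policy through the Lagrangian decoupling already set up in \eqref{eqn:Lagrange_persource_CMDP_policy}, together with the indexability asserted in Conjecture~\ref{conjecture:multiple-source-WI}. First I would fix the charge $\hat{\mu}$ and view sub-problem $i$ as the unconstrained $\alpha$-discounted MDP whose value function solves \eqref{eqn:Bellman-eqn-multiple-source-with-fading-general}. From that Bellman equation, the optimal decision at state $(E_i,K_i)$ with $E_i \geq E_s$ is to probe exactly when $v_i(E_i,K_i) \leq u_i(E_i,K_i)$, i.e.\ when the state lies outside the passive set $\mathcal{I}(\hat{\mu})$. The first step is then to convert this into a threshold on the charge: using the monotonicity of $\mathcal{I}(\hat{\mu})$ in $\hat{\mu}$ and the definition of $\mathcal{WI}_i(E_i,K_i)$ as the indifference charge, probing is optimal iff $\hat{\mu} \leq \mathcal{WI}_i(E_i,K_i)$ and idling is optimal iff $\hat{\mu} \geq \mathcal{WI}_i(E_i,K_i)$. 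This yields a clean per-source characterization of the $\hat{\mu}$-optimal policy in terms of a single scalar index.

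Next I would reconnect the decoupled sub-problems to the per-slot probing requirement of the original problem \eqref{eqn:main-problem-dis}. By Lagrangian duality for the relaxed CMDP \eqref{eqn:main-problem_CMDP}, there is a dual-optimal multiplier $\hat{\mu}^*$ at which the time-averaged probing budget $1/N$ binds; at that $\hat{\mu}^*$ every source probes precisely when its index exceeds $\hat{\mu}^*$. To pass from this soft, time-averaged description to the hard ``exactly one probe per slot'' rule while staying consistent with the same ordering, I would use a sweeping argument on a common charge: as $\hat{\mu}$ is lowered from $+\infty$, the first source to switch from idling to probing is the one with the largest $\mathcal{WI}_i$, so a unit probing budget is spent exactly on $i^*(t)=\arg\max_i \mathcal{WI}_i(t)$. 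Finally, the energy feasibility constraint is incorporated by freezing $b_i=0$ (equivalently, setting the index to $-\infty$) for any source with $E_i < E_s$, so that the maximization ranges only over feasible sources with $E_{i^*}\geq E_s$, matching the statement.

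I expect the main obstacle to be the last passage, from the relaxed problem to the original hard-constrained one. The Lagrangian relaxation only yields a lower bound on the optimal cost, and the greedy highest-index selection need not attain this bound at finite $N$; this is precisely why the result is stated as a conjecture rather than a theorem. A fully rigorous argument would require either an asymptotic (mean-field) optimality analysis as $N \to \infty$ in the spirit of Weber and Weiss, or an exploitation of the specific AoI--energy transition kernel and of the threshold structure of the sampling decision to certify optimality directly. A further complication, peculiar to our setting, is that the two-stage probe-then-sample action makes the index itself emerge from the nested minimization in $W^{*}_i$ of \eqref{eqn:Bellman-eqn-multiple-source-with-fading-general}, so care is needed to ensure that indexability and the monotone threshold in $\hat{\mu}$ survive this inner optimization before the sweeping argument can be applied.
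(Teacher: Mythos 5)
The paper offers no proof of this statement: it is deliberately labelled a \emph{conjecture}, and the abstract itself only claims that the Whittle index policy is near-optimal. So there is no argument of the authors' to compare yours against; what you have written is the standard motivating derivation behind the Whittle heuristic, and you have correctly located the places where it fails to be a proof. It is worth being explicit about why those gaps are fatal rather than technical. First, your opening paragraph presupposes Conjecture~\ref{conjecture:multiple-source-WI} (indexability), which is itself unproven here; the Remark following it notes that the two-dimensional state $(E_i,K_i)$ blocks the usual arguments available for one-dimensional AoI states, so the monotone threshold in $\hat{\mu}$ that your sweep relies on is not available. Without indexability the passive set $\mathcal{I}(\hat{\mu})$ need not grow monotonically in $\hat{\mu}$, the indifference charge need not be unique, and $\mathcal{WI}_i(E_i,K_i)$ is not guaranteed to order the sources consistently.

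Second, and more fundamentally, the sweeping argument in your second paragraph does not deliver what the conjecture asserts. Lowering a common charge $\hat{\mu}$ from $+\infty$ and observing which source activates first does identify $\arg\max_i \mathcal{WI}_i$, but the decoupled sub-problems are tied together only through the time-averaged constraint of \eqref{eqn:main-problem_CMDP}; there is in general no single $\hat{\mu}^*$ at which the relaxed optimal policy probes exactly one source in \emph{every} slot, so the greedy per-slot index rule is a projection of the relaxed solution onto the feasible set of \eqref{eqn:main-problem-dis}, not an optimizer of it. The Lagrangian supplies only a lower bound, and exact optimality of the index policy at finite $N$ is false in general for restless bandits; the most one could hope to prove is asymptotic optimality as $N\to\infty$ under a Weber--Weiss global-attractor condition, which would additionally have to be verified for this two-stage probe-then-sample kernel. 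Your own closing paragraph concedes all of this, and that is the honest conclusion: the proposal is a faithful reconstruction of the heuristic, not a proof, and no proof is currently known --- which is precisely why the paper states the result as a conjecture rather than a theorem.
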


\noindent {\bf Sampling Policy: }
For deriving the optimal sampling policy of the $i$-th source, we need to prove the following properties of cost functions which can be easily done by following similar steps as in \cite[Appendix B]{jaiswal2024age}.

\begin{lemma} \label{lemma:single-sensor-single-process-J-decreasing-in-p}
For each source $i$,  
$J^{*}_{i}(E_{i}, K_{i})$ is increasing in $K_{i}$ and $W^{*}_{i}(E_{i},K_{i},C(i, \cdot)$ is decreasing in $p(C(i,\cdot))$. %\textcolor{red}{write mathematically}????????????????????
\end{lemma}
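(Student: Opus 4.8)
The plan is to establish both claims by value iteration combined with induction on the iteration index, exploiting that the Bellman system \eqref{eqn:Bellman-eqn-multiple-source-with-fading-general} is the fixed point of the associated dynamic-programming operator. I would initialize with $J_i^{(0)}(E_i,K_i)=0$ and define $J_i^{(n+1)}$, $v_i^{(n+1)}$, and $W_i^{(n+1)}$ by the right-hand sides of \eqref{eqn:Bellman-eqn-multiple-source-with-fading-general} with $J_i^*$ replaced by $J_i^{(n)}$. Since $J_i^{(n)} \to J_i^*$ pointwise by \cite[Proposition 7.3.1]{bertsekas2005dynamic}, and monotonicity in a parameter is preserved under pointwise limits, it suffices to prove each monotonicity property for every iterate and then pass to the limit. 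Throughout, ``increasing'' and ``decreasing'' are meant in the weak (nondecreasing/nonincreasing) sense.

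First I would show that $J_i^{(n)}(E_i,K_i)$ is nondecreasing in $K_i$ for each fixed $E_i$, by induction on $n$; the base case is immediate. For the inductive step, I fix $E_i$ and observe that the energy updates $\min\{E_i+A_i,B_i\}$ and $\min\{E_i-E_s+A_i,B_i\}$ are independent of $K_i$, so varying $K_i$ affects only the age coordinate. Then $u_i^{(n+1)}(E_i,K_i)=K_i+\alpha\mathbb{E}_{A_i}J_i^{(n)}(\min\{E_i+A_i,B_i\},K_i+1)$ is nondecreasing in $K_i$, since the per-stage term $K_i$ is, since $K_i\mapsto K_i+1$ is increasing, since $J_i^{(n)}$ is nondecreasing in age by hypothesis, and since $\mathbb{E}_{A_i}$ preserves monotonicity. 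In $W_i^{(n+1)}$, the idle branch coincides with $u_i^{(n+1)}$, while the sampling branch $K_i(1-p)+\alpha p\,\mathbb{E}_{A_i}J_i^{(n)}(\min\{E_i-E_s+A_i,B_i\},1)+\alpha(1-p)\mathbb{E}_{A_i}J_i^{(n)}(\min\{E_i-E_s+A_i,B_i\},K_i+1)$ is nondecreasing in $K_i$ because its first and third summands are (using $1-p\ge 0$ and the hypothesis) and its middle summand is constant in $K_i$. As pointwise minima and nonnegative weighted sums of nondecreasing functions remain nondecreasing, $W_i^{(n+1)}$, then $v_i^{(n+1)}$, and finally $J_i^{(n+1)}=\min\{u_i^{(n+1)},v_i^{(n+1)}\}$ are nondecreasing in $K_i$; the $E_i<E_s$ case equals $u_i^{(n+1)}$ and is identical. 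Letting $n\to\infty$ gives that $J_i^*$ is nondecreasing in $K_i$.

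For the second claim I would argue directly from the limiting value function. Writing $W_i^*(E_i,K_i,C(i,\cdot))=\min\{A,B(p)\}$ with $p=p(C(i,\cdot))$, the idle cost $A=K_i+\alpha\mathbb{E}_{A_i}J_i^*(\min\{E_i+A_i,B_i\},K_i+1)$ is independent of $p$, while $B(p)=K_i(1-p)+\alpha p\,G_1+\alpha(1-p)G_2$ with $G_1=\mathbb{E}_{A_i}J_i^*(\min\{E_i-E_s+A_i,B_i\},1)$ and $G_2=\mathbb{E}_{A_i}J_i^*(\min\{E_i-E_s+A_i,B_i\},K_i+1)$. Treating $p$ as a real variable, $\partial B/\partial p=-K_i+\alpha(G_1-G_2)$. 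The first claim together with $1\le K_i+1$ gives $G_1\le G_2$, hence $\alpha(G_1-G_2)\le 0$; combined with $K_i\ge 0$ this makes $\partial B/\partial p\le 0$, so $B(p)$ is nonincreasing in $p$. Since the minimum of a $p$-independent constant and a nonincreasing function is nonincreasing, $W_i^*$ is nonincreasing (i.e., decreasing) in $p(C(i,\cdot))$.

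I expect the only genuine obstacle to be the uniformity of the monotonicity-in-$K_i$ step as value iteration converges, given that the per-stage cost $K_i$ is unbounded; this is handled by invoking the convergence guarantee behind \cite[Proposition 7.3.1]{bertsekas2005dynamic} and the preservation of monotonicity under pointwise limits. The decreasing-in-$p$ claim then requires no further induction, as it follows algebraically by substituting the already-monotone $J_i^*$ into the closed form of $W_i^*$.
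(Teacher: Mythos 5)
Your proposal is correct and follows essentially the same route the paper takes: the paper defers to \cite[Appendix B]{jaiswal2024age}, which establishes these monotonicity properties by exactly this value-iteration-plus-induction argument (monotonicity in $K_i$ propagated through each Bellman iterate and preserved in the pointwise limit, then the decrease in $p(C(i,\cdot))$ read off algebraically from the sampling branch using $G_1\le G_2$). Your explicit handling of the weak-monotonicity convention and the unbounded per-stage cost is consistent with the paper's own reliance on \cite[Proposition 7.3.1]{bertsekas2005dynamic}, so no gap remains.
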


\begin{theorem}\label{theorem:single-sensor-single-process-with-fading-policy-structure}
 At any time $t$, after selecting the $i^*$-th source node to probe the channel,  the optimal sampling policy for the selected source is a threshold policy on $p(C (i^*,\cdot))$. Node $i^*$ has to be sampled if $p(C(i^*,\cdot))\geq p_{th}(E_{i^*}, K_{i^*})$ for a threshold   $p_{th}(\cdot,\cdot)$. 
\end{theorem}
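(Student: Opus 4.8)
The plan is to exploit the explicit structure of the inner minimization that defines $W^*_{i^*}(E_{i^*},K_{i^*},C(i^*,\cdot))$ in \eqref{eqn:Bellman-eqn-multiple-source-with-fading-general}. Once the $i^*$-th source is probed and its channel is revealed, the sampling decision reduces to comparing the two branches of that $\min$: the \emph{idle} branch ($a_{i^*}=0$), with cost $K_{i^*}+\alpha\mathbb{E}_{A_{i^*}}J^*_{i^*}(\min\{E_{i^*}+A_{i^*},B_{i^*}\},K_{i^*}+1)$, and the \emph{sample} branch ($a_{i^*}=1$). First I would observe that the idle branch does not involve $p(C(i^*,\cdot))$ at all, so for fixed $(E_{i^*},K_{i^*})$ it is a constant in $p \doteq p(C(i^*,\cdot))$. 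Hence the whole argument reduces to understanding how the sample branch varies with $p$.

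The key step is to rewrite the sample branch as an affine function of $p$. Writing $G_1 \doteq \mathbb{E}_{A_{i^*}}J^*_{i^*}(\min\{E_{i^*}-E_s+A_{i^*},B_{i^*}\},1)$ for the post-success continuation cost and $G_2 \doteq \mathbb{E}_{A_{i^*}}J^*_{i^*}(\min\{E_{i^*}-E_s+A_{i^*},B_{i^*}\},K_{i^*}+1)$ for the post-failure one, a direct expansion rewrites the sample cost as $K_{i^*}+\alpha G_2 + p\,(\alpha G_1 - \alpha G_2 - K_{i^*})$, i.e.\ linear in $p$ with slope $\alpha(G_1-G_2)-K_{i^*}$. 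I would then invoke Lemma~\ref{lemma:single-sensor-single-process-J-decreasing-in-p}: since $J^*_{i^*}$ is increasing in its age argument and a successful transmission resets the age to $1 \leq K_{i^*}+1$, we get $G_1 \leq G_2$, so $\alpha(G_1-G_2)\leq 0$; combined with $K_{i^*}\geq 0$ this makes the slope non-positive, so the sample cost is non-increasing in $p$.

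Finally, because the sample branch is non-increasing in $p$ while the idle branch is constant, their difference is non-increasing in $p$ and therefore changes sign at most once. Defining $p_{th}(E_{i^*},K_{i^*})$ as the infimum of $p$ at which the sample cost drops to or below the idle cost, it is optimal to sample exactly when $p(C(i^*,\cdot)) \geq p_{th}(E_{i^*},K_{i^*})$, which is the claimed threshold policy; since $p(C(i^*,\cdot))$ takes values in the finite set $\{p_1,\dots,p_m\}$, the threshold simply picks out those observed channel states whose success probability is at least $p_{th}(E_{i^*},K_{i^*})$. The main obstacle is the monotonicity $G_1 \leq G_2$, which is exactly where Lemma~\ref{lemma:single-sensor-single-process-J-decreasing-in-p} is needed; once it is in hand, affineness makes the threshold immediate, so the substantive work lives in the (already-granted) proof that $J^*_{i^*}$ is monotone in the age. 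I would also note that the feasibility condition $E_{i^*}\geq E_s$ of Conjecture~\ref{conjecture:single-sensor-single-process-with-fading-policy-structure} guarantees the sample branch is available, so no degenerate case arises.
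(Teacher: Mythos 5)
Your proposal is correct and follows essentially the same route as the paper's (outsourced) proof: the paper defers to Theorem~1 of \cite{jaiswal2024age}, where the argument is precisely that the sampling cost is affine and non-increasing in $p(C(i^*,\cdot))$ (via the monotonicity of $J^*_{i^*}$ in the age, i.e.\ Lemma~\ref{lemma:single-sensor-single-process-J-decreasing-in-p}) while the idle cost is constant in $p(C(i^*,\cdot))$, so the two branches cross at most once and a threshold on $p(C(i^*,\cdot))$ results. Your affine expansion with slope $\alpha(G_1-G_2)-K_{i^*}\le 0$ is exactly the substantive step of that proof.
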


\begin{proof} Similar to that of Theorem 1 in  \cite[Appendix C]{jaiswal2024age}).
\end{proof}

The following intuitions are supported by policy structure. Firstly, given the energy and age of all the source nodes, we select the $i^*$-th source node with the highest Whittle's index to probe the channel state. 
Next, given $E_{i^*}$, $K_{i^*}$, and probed channel state $C(i^*,\cdot)$, if the packet success probability is above a threshold, then it is optimal to sample $i^*$-th source and transmit the measurement packet.% We will later numerically observe in Section~\ref{section:numerical-work} some intuitive properties of $p_{th}(E_i,K_i)$ as a function of $E_i$, $K_i$ and $\lambda_i$.\\

Our proposed WITS3 policy is summarised in  Algorithm~\ref{alg:cap}.

\begin{algorithm}
\caption{ Whittle's Index and Threshold based Source Scheduling and Sampling  (WITS3) policy }\label{alg:cap}
\begin{algorithmic}
\State $\bm{Input}: \mathcal{S},\mathcal{A}, N, \{\lambda_i, q_{j,i}, B_i, K_{i}\}_{ 1 \leq i \leq N, 1 \leq j \leq m } $.
%\State $\bm{Initialize}$: $\hat{\lambda}(0)=0$,  $t\leftarrow 1$,  initial state   $s_{1}$.
\For{  $ t=1, 2,3 \cdots$}
\State{ $\forall i \in \{1,2,\cdots,N\}$ such that $E_i \geq E_s$}
\State $\bm{Source Probing Policy}$:
\State Calculate the Whittle's index of each source. 
\State Select the $i^*$-th source having the highest Whittle index to probe the channel state ($b_{i^*}(t)=1$), where\\ $ i^*=\arg \max_{i} \mathcal{WI}_i$.
\State Observe the current channel state $C(i^*,t)$ (and thus  the corresponding packet success probability $p(C(i^*,t))$ or $p(i^*,t)$) for the $i^*$-th source. 
\State $\bm{Source Sampling Policy}$:
\If{ $p(C(i^*,t)) \geq p_{th}(E_{i^*}(t), K_{i^*}(t))$  (where $p_{th}(\cdot,\cdot)$ is obtained via value iteration)} 
\State Sample the $i^*$-th source and send the packet ($a_{i^*}(t)=1$),  observe $r_{i^*}(t)$.
\Else{ $a_{i^*}(t)=0$}.
\EndIf
\State{ $ \forall  i \in \{1,2,\cdots,N\}$, such that  $i \neq i^*$  or  $E_i < E_s$}
\State $ b_{i}(t)=0$ (no  probing) and $a_{i}(t)=0$ (no sampling). 
\EndFor
\end{algorithmic}
\end{algorithm}

 \vspace{-6pt}
\section{Q-Learning for an unknown environment}\label{section:RL}
In this section, we assume that the channel statistics and the energy harvesting characteristics are not known, and need to be learnt from the history of states, actions, and observations, in order to solve   \eqref{eqn:main-problem}. Specifically, we assume  that the energy generation rate $\lambda_i$ and the channel state probabilities $\{q_{j, i}\}_{1 \leq j \leq m}$ associated with the  $i$-th source node, $\forall i \in \{1,2,..\cdots, N\}$, are unknown. Also, the packet success probability $p(i,t)$ for $i$-th source node at time $t$ may be known or unknown. Our algorithm Q-WITS3 can handle both cases.  
We use the Q-learning technique \cite[Section 11.4.2]{bhatnagar2013reinforcement}   to learn Q-values and Whittle's indices for the unconstrained MDP subproblems.  \\
 
  \vspace{-10pt}
 \subsubsection{Optimality equation for decoupled subproblems expressed as Q-function}
 \label{section:Optimality equation for Q-fuction SSSP}

Recall from Section~\ref{section:multiple-sensor-CMDP-stationary} that, in order to solve   \eqref{eqn:main-problem},  we use a Lagrangian multiplier $\hat{\mu}$  to convert the CMDP to an unconstrained MDP and further decouple this optimization problem across sources. For a generic state-action pair $(E_i,K_i,b_i)$,  the optimal Q-function corresponding to the $i$-the source node is represented by $Q^{*}_i(E_i,K_i,b_i)$, where $(E_i,K_i) $ denotes the  state and the action is given by $b_i \in \{0,1\}$. Here  $Q^{*}_i(E_i,K_i,b_i)$ indicates the expected cost incurred if the current state and action are   $(E_i,K_i)$ and $b_i$ respectively, and an optimal policy is implemented from the next time instant. Furthermore, for an intermediate state $(E_i,K_i,C(i,\cdot))$ and   action $a_i \in \{0,1\}$, the optimal Q-function is represented by $Q^{*}_i(E_i,K_i,C(i,\cdot), a_i)$. It is crucial to acknowledge that in order to manage primary states and intermediate states, two distinct types of Q-values must be maintained for each source $i$; this is not the case with conventional Q-learning.  

 The optimal Q-value   for $i$-th decoupled source is given below:

 \scriptsize
\begin{eqnarray}\label{eqn:Q-eqn-single-process-1} 
&&Q^{*}_i(E_i \geq E_{s}, K_i, b_i=0) \nonumber\\
&=&K_i+\alpha \mathbb{E}_{A_i}J^{*}(min\{E_i+A_i,B_i\},K_i+1) \nonumber \\
&=& K_i+\alpha \mathbb{E}_{A_i}\min_{b_i^{'} \in \{ 0,1 \}}Q^{*}(min\{E_i+A_i,B_i\}, K_i+1,b_i^{'})
\end{eqnarray}
\normalsize

In \eqref{eqn:Q-eqn-single-process-1}, the terms on the right-hand side of the first equality operator include the immediate cost of not selecting the $i$-th source node  to probe the channel state by scheduler $(b_i=0)$ when state is $(E_i \geq E_{s}, K_i)$ and an $\alpha$ discounted future cost $\mathbb{E}_{A_i}J^{*}_i(\min\{E_i+A_i,B_i\},K_i+1)$  that is later replaced by 
$\mathbb{E}_{A_i}\min_{b_i^{'} \in \{ 0,1 \}}Q^{*}_i(\min\{E_i+A_i,B_i\},K_i+1,b_i^{'})$ (averaged over  $A_i$). However, 

\scriptsize
\begin{eqnarray} \label{eqn:Q-eqn-single-process-2}
Q^{*}_i(E_i \geq E_{s}, K_i,b_i=1)   
&=& \hat{\mu}+\sum_{j=1}^m q_{j,i} W_i^{*}(E_i,K_i,C_{j})\nonumber\\ 
&=& \hat{\mu}+\sum_{j=1}^m q_{j,i}\min_{a_i \in \{ 0,1 \}}Q^{*}_i(E_i,K_i,C_{j}, a_i) \nonumber\\ 
\end{eqnarray}
\normalsize

where $\hat{\mu}$ in the right-hand side of \eqref{eqn:Q-eqn-single-process-2} is the penalty for selecting the $i$-th source node to probe the channel state. Based on the probed channel quality, the selected source will take sampling decision $a_i \in \{ 0,1 \}$ for an intermediate state $(E_i,K_i, C(i,\cdot))$. Thus, 

\scriptsize
\begin{eqnarray} \label{eqn:Q-eqn-single-process-4}
&&Q^{*}_i(E_i,K_i, C(i,\cdot),a_i=0)\nonumber\\
&=&K_i+\alpha \mathbb{E}_{A_i}\underbrace{\min_{b_i^{'} \in \{ 0,1 \}}Q^{*}_i(\min\{E_i+A_i,B_i\}, K_i+1,b_i^{'})}_{=J^*_i(\min\{E_i+A_i,B_i\}, K_i+1)}
\end{eqnarray}
\normalsize

and 

\scriptsize
\begin{eqnarray} \label{eqn:Q-eqn-single-process-5}
&&Q^{*}_i(E_i,K_i,C(i,\cdot),a_i=1)\nonumber\\
&=&K_i(1-p(C(i, \cdot)))+\alpha p(C(i, \cdot))\mathbb{E}_{A_i}\min_{b_i^{'} \in \{ 0,1 \}}Q^{*}_i(\min\{E_i-E_{s}+\nonumber\\
&&A_i,B_i\},1,b_i^{'})+\alpha (1-p(C(i, \cdot))\mathbb{E}_{A_i}\min_{b_i^{'} \in \{ 0,1 \}}Q^{*}_i(\min\{E_i-E_{s}+\nonumber\\
&&A_i,B_i\},K_i+1,b_i^{'})
\end{eqnarray}
\normalsize

Likewise, \eqref{eqn:Q-eqn-single-sensor-single-process-with-fading-general} represents  the optimal Q-function for state $(E_i< E_{s},K_i)$ where the only practical action is   $b_i = 0$ due to lack of energy. Thus, 

\scriptsize
\begin{eqnarray} \label{eqn:Q-eqn-single-sensor-single-process-with-fading-general}
&&Q^{*}_i(E_i< E_{s},K_i,b_i=0)\nonumber\\
&=& K_i+\alpha \mathbb{E}_{A_i}J^{*}_i(\min\{E_i+A_i,B_i\},K_i+1)\nonumber\\
&=&K_i+\alpha \mathbb{E}_{A_i}\min_{b_i^{'} \in \{ 0,1 \}}Q^{*}_i(\min\{E_i+A_i,B_i\},K_i+1,b_i^{'})
\end{eqnarray}
\normalsize

\subsubsection{Q-Learning algorithm for decoupled subproblems}\label{section:Optimality equation for Q-fuction SSSP}
Let us recall that Whittle's index of source $i$ in state $(E_i, K_i)$ is defined as the value $\hat{\mu}_i(E_i, K_i)$ of $\hat{\mu}_i$ for which both the actions of probing and not probing the $i$-th source are equally preferable in state $(E_i, K_i)$. With some abuse of notation, let us define $Q^{\hat{\mu}_i}_i(E_i,K_i,b_i) \doteq Q_{i}(E_i,K_i,\hat{\mu}_i,b_i)$ $\forall i \in \{1,2,\cdots,N\}$.  Then, if $\hat{\mu}_i(E_i,K_i)$ or simply $\hat{\mu}_i$ is the Whittle's index for state $(E_i,K_i)$, then:

\footnotesize
\begin{eqnarray}\label{mu-learning-criterion}
Q_{i}(E_i,K_i,\hat{\mu}_i,b_i=1)=Q_{i}(E_i,K_i, \hat{\mu}_i,b_i=0) 
\end{eqnarray}
\normalsize

Our goal is to learn all Q-values and Whittle's indices.
To this end, we employ two-time scale stochastic approximation iterations \cite{borkar2009stochastic}.  Q-learning is carried out in a faster timescale, where $\hat{\mu}_i$ remains quasi-static. On the other hand,  $\hat{\mu}_i(E_i,K_i)$ for all nodes and for all states are updated at a slower timescale where the Q-values appear to have converged.

The optimal Q-functions for the $i$-th decoupled source, $\forall i \in \{1,2,\cdots,N\}$ are given by the zeros of \eqref{eqn:Q-eqn-single-process-1}-\eqref{eqn:Q-eqn-single-sensor-single-process-with-fading-general}.  Since all states are not encountered at a given time $t$, we employ  asynchronous stochastic approximation \cite{borkar1998asynchronous} to iteratively find the optimal Q-functions and learn Whittle's indices for all the $N$ sources, beginning from any arbitrary Q-functions for each source.  Two-time scale  asynchronous stochastic approximation iterations involve two step size sequences:   $\{d(t)\}_{t \geq 0}$ for learning the Q-values at a faster timescale,  and $\{f(t)\}_{t \geq 0}$ for learning Whittle's indices at a slower timescale. We make the following assumptions \cite{bhatnagar2011borkar, borkar2009stochastic}.
\begin{assumption}\label{assumption2}
(i) $0< d(t) \leq \bar{d}$ where $\bar{d}>0$, \\
(ii) $d(t+1)\leq d(t)$ for all $t \geq 0$, \\
(iii) $\sum_{t=0}^\infty d(t) = \infty$ and $\sum_{t=0}^\infty (d(t))^{2} < \infty$.\\
(iv) $0< f(t) \leq \bar{f}$ where $\bar{f} > 0$, \\
(v) $f(t+1)\leq f(t)$ for all $t \geq 0$, \\
(vi) $\sum_{t=0}^\infty f(t) = \infty$ and $\sum_{t=0}^\infty (f(t))^{2} < \infty$.\\
(vii) $f(t) = o(d(t))$
\end{assumption}

Let $\nu_{t}(s_i,b_i)$ indicate the number of occurrences of the primary state-action pair $(s_i,b_i)$ of source $i$ up to iteration $t$, where $s_i \in \mathcal{S}_i $ and $b_i \in \{0,1\}$. A comparable notation is assumed for any generic intermediate state $v_i$ and its corresponding action $a_i$ of source $i$. The convergence of asynchronous stochastic approximations requires the following assumption \cite{bhatnagar2011borkar}:
\begin{assumption}\label{assumption1}
   $\lim \inf_{t \rightarrow \infty} \frac{\nu_{t}(s_i,b_i)}{t} > 0$ almost surely $\forall (s_i,b_i)$, and $\lim \inf_{t \rightarrow \infty} \frac{\nu_{t}(v_i,a_i)}{t} > 0$ almost surely $\forall (v_i,a_i)$ and for all source $i \in \{1,2,\cdots,N\}$.
 \end{assumption}

The proposed Q-learning algorithm iteratively modifies the entries of a look-up table $Q_{i,t}(\cdot,\cdot)$ of each source $i$ for different state-action pairs based on the current state, the action taken, and the observed next state of that particular source.
 However, to satisfy Assumption~\ref{assumption1}, it is necessary to visit all state-action pairs of each source $i$ infinitely and comparatively often. This is guaranteed by considering a small exploration probability $\epsilon \in (0,1)$; at each decision instant, with probability $\epsilon$, the scheduler chooses one of the sources uniformly at random,  and, with probability $(1-\epsilon)$, it selects the source with the highest Whittle's index to probe the channel state. This ensures visitation to all states and intermediate states infinitely and comparatively often. Similarly, while making the sampling decision for the selected source $i^*$ with observed channel state $C(i^*, \cdot)$,  it picks an action uniformly at random with probability $\epsilon$, and takes action $\arg \min_{a_{i^*}} Q_i(v_{i^*}, a_{i^*})$ with probability $(1-\epsilon)$, where $v_{i^*}$ is the corresponding intermediate state.

The two timescale iterations  corresponding to the source $i$ for various states and intermediate states   are as follows:

\scriptsize
\begin{eqnarray} \label{eqn:Q-iteration-single-process-iterate1}
&&Q_{i,t+1}(E_i \geq E_{s},K_i, \hat{\mu}_{i,t}(E_i, K_i),b_i=0)\nonumber\\
&=&Q_{i,t}(E_i,K_i, \hat{\mu}_{i,t}(E_i, K_i),b_i=0)+d(\nu_{t}(E_i,K_i,b_i=0))\mathrm{\mathbf{1}}\{s_i(t)= \nonumber\\
&&(E_i,K_i),b_i(t)=0\}[K_i+\alpha \min_{b_i^{'} \in \{ 0,1 \}}Q_{i,t}(\min\{E_i+A_i(t),B_i\},\nonumber\\
&&K_i+1,\hat{\mu}_{i,t}(E_i(t+1), K_i(t+1)),b_i^{'})-Q_{i,t}(E_i,K_i, \hat{\mu}_{i,t}(E_i, K_i),\nonumber\\
&&b_i=0)]
\end{eqnarray}
\normalsize

\scriptsize
\begin{eqnarray} \label{eqn:Q-iteration-single-process-iterate2}
&&Q_{i,t+1}(E_i \geq E_{s},K_i, \hat{\mu}_{i,t}(E_i, K_i),b_i=1) \nonumber\\
&=& \hat{\mu}_{i,t}(E_i, K_i)+ Q_{i,t}(E_i,K_i, \hat{\mu}_{i,t}(E_i, K_i),b_i=1)+d(\nu_{t}(E_i,K_i,\nonumber\\
&&b_i=1)\mathrm{\mathbf{1}}\{s_i(t)=(E_i,K_i),b_i(t)=1\}[\min_{a_i\in \{ 0,1 \}}Q_{i,t}(E_i,K_i,C(i,t), \nonumber\\
&&\hat{\mu}_{i,t}(E_i, K_i), a_i)-Q_{i,t}(E_i,K_i,\hat{\mu}_{i,t}(E_i, K_i),b_i=1)]
\end{eqnarray}
\normalsize

\scriptsize
\begin{eqnarray} \label{eqn:Q-iteration-single-process-iterate4}
&&Q_{i,t+1}(E_i,K_i,C(i,\cdot),\hat{\mu}_{i,t}(E_i, K_i),a_i=0)\nonumber\\
&=&Q_{i,t}(E_i,K_i,C(i,\cdot),\hat{\mu}_{i,t}(E_i, K_i),a_i=0)+d(\nu_{t}(E_i,K_i,C(i,\cdot),\nonumber\\
&&a_i=0))\mathrm{\mathbf{1}}\{v_i(t)=(E_i,K_i,C(i,\cdot)),a_i(t)=0\}[K_i+\alpha \min_{b_i^{'}\in \{ 0,1 \}}Q_{i,t}(\nonumber\\
&&\min\{E_i+A_i(t),B_i\},K_i+1,\hat{\mu}_{i,t}(E_i(t+1), K_i(t+1)),b_i^{'})-\nonumber\\
&&Q_{i,t}(E_i,K_i,C(i,\cdot),\hat{\mu}_{i,t}(E_i, K_i),a_i=0)]
\end{eqnarray}
\normalsize

\scriptsize
\begin{eqnarray} \label{eqn:Q-iteration-single-process-iterate4.1}
&&Q_{i,t+1}(E_i,K_i,C(i,\cdot),\hat{\mu}_{i,t}(E_i, K_i),a_i=1)\nonumber\\
&=&Q_{i,t}(E_i,K_i,C(i,\cdot),\hat{\mu}_{i,t}(E_i, K_i),a_i=1)+d(\nu_{t}(E_i,K_i,C(i,\cdot),\nonumber\\
&&a_i=1))\mathrm{\mathbf{1}}\{v_i(t)=(E_i,K_i,C(i,\cdot)),a_i(t)=1\}[K_i(1-p(C(i,\cdot)))+\nonumber\\
&&\alpha p(C(i,\cdot))\min_{b_i^{'}\in \{ 0,1 \}}Q_{i,t}(\min\{E_i-E_{s}+A_i(t),B_i\},1,\hat{\mu}_{i,t}(E_i(t+1),\nonumber\\
&& K_i(t+1)),b_i^{'})+\alpha (1-p(C(i,\cdot)))\min_{b_i^{'}\in \{ 0,1 \}}Q_{i,t}(\min\{E_i-E_{s}+A_i(t),\nonumber\\
&& B_i\},K_i +1,\hat{\mu}_{i,t}(E_i(t+1), K_i(t+1)),b_i^{'})-Q_{i,t}(E_i,K_i,C(i,\cdot),\nonumber\\
&&\hat{\mu}_{i,t}(E_i, K_i),a_i=1)]
\end{eqnarray}
\normalsize

\scriptsize
\begin{eqnarray} \label{eqn:Q-iteration-single-process-iterate5}
&&Q_{i,t+1}(E_i<E_{s},K_i,\hat{\mu}_{i,t}(E_i, K_i),b_i=0)\nonumber\\
&=&Q_{i,t}(E_i,K_i,\hat{\mu}_{i,t}(E_i, K_i),b_i=0)+d(\nu_{t}(E_i,K_i,b_i=0))\mathrm{\mathbf{1}}\{s_i(t)=\nonumber\\
&&(E_i,K_i),b_i(t)=0\}[K_i+ \alpha \min_{b_i^{'}\in \{ 0,1 \}}Q_{i,t}(\min\{E_i+A_i(t),B_i\},\nonumber\\
&&K_i+1,\hat{\mu}_{i,t}(E_i(t+1), K_i(t+1)),b_i^{'})-Q_{i,t}(E_i,K_i,\hat{\mu}_{i,t}(E_i, K_i),\nonumber\\
&&b_i=0)]
\end{eqnarray}
\normalsize

In these iterations, we assume that the packet success probability $p(C(i,\cdot))$ associated with channel state $C(i,\cdot)$ is known. However, if these probabilities are unknown, then one can replace $p(C(i,\cdot))$   by the packet success indicator $r_i(t)$, and run the same iteration.

The Whittle indices $\hat{\mu}_{i}(E_i, K_i)$ for each  state $(E_i, K_i)$ for each node~$i$, are updated at a slower timescale:

\scriptsize
\begin{eqnarray} 
\hat{\mu}_{i,t+1}(E_i, K_i) &=& \hat{\mu}_{i,t}(E_i, K_i) +f(\nu_{t}(E_i,K_i))\mathrm{\mathbf{1}}\{s_i(t)=(E_i,K_i)\} \nonumber\\
&&  \bigg( Q_{i,t}(E_i,K_i,\hat{\mu}_{i,t}(E_i, K_i),b_i=0) - Q_{i,t}(E_i,K_i, \nonumber\\
&&\hat{\mu}_{i,t}(E_i, K_i),b_i=1) \bigg)
\end{eqnarray}
\normalsize

It is important to note that, whenever the state $(E_i, K_i)$ is encountered, the {\em penalty} $\hat{\mu}_i (E_i, K_i)$ for probing node~$i$, is increased if $Q_{i,t}(E_i,K_i,\hat{\mu}_{i,t}(E_i, K_i), b_i=0) \geq Q_{i,t}(E_i,K_i,\hat{\mu}_{i,t}(E_i, K_i),b_i=1)$, and is decreased otherwise. In other words, this {\em penalty} is increased if the cost of not probing is more than the cost of probing.

{\bf Q-WITS3 algorithm: } In the Q-WITS3 algorithm, source probing decisions are made by using the latest available Whittle indices, subject to randomization with a small probability $\epsilon>0$ as discussed earlier. Under a generic intermediate state $(E_i,K_i, C(i,\cdot))$ at time $t$, action $\arg \min_{a_i \in \{0,1\}}Q_{i,t}(E_i,K_i, C(i,\cdot),\hat{\mu}_{i,t}(E_i, K_i),a_i)$is taken, again subject to randomization with a small probability $\epsilon>0$.

\section{Simulation results }\label{section:Simulation results}

\subsection{Multiple sources with known system parameters}\label{subsection:MS_KSP_Simulation results}

We consider three sources ($N=3$) with i.i.d. Bernoulli energy arrival processes   with rates $\lambda_{1}=0.6$, $\lambda_{2}=0.5$, and $\lambda_{3}= 0.4$. Each source $i$ is equipped with finite energy buffer size $B_i=5$ and incurs sampling energy cost   $E_{s}=1$ unit. The maximum age of a process of source $i$ is assumed to be upper bounded at $K_{i_{max}}=10$. Also, we assume that a fading channel has four states ($m=4$) and the channel state distribution $\{q_{j,i}\}_{1 \leq j \leq m, 1 \leq i \leq N}$ across the three sources is given by the matrix $\begin{bmatrix}
0.4 & 0.4 & 0.1 & 0.1\\
0.25 & 0.25 & 0.25 & 0.25\\
0.1 & 0.1 & 0.4 & 0.4\\
\end{bmatrix}^{T}$ and the corresponding packet success probabilities for all four channel states are given by $[0.9, 0.5, 0.3, 0.1]$.  Numerical exploration reveals that for each source $i$, the Whittle's index $(\mathcal{WI}_i)$ increases in both $E_i$ and $K_i$;  see Fig.~\ref{Whittle_index_inc}. This matches our intuition that a node with a higher age and more energy will get priority in scheduling. It is observed that the $\mathcal{WI}_1 \geq \mathcal{WI}_2 \geq \mathcal{WI}_3$ since source $1$ has a higher energy arrival rate and statistically better channel. 

Also,   Fig.~\ref{Whittel_pthesh} shows that the sampling threshold $p_{th}(E_i, K_i)$ decreases with $E_i$   since higher energy in the buffer allows the EH source to sample a process more aggressively. We also notice that, for fixed $K_i$ and $\hat{\mu}$ (the penalty for probing source $i$),   $p_{th}(\cdot,\cdot)$ is smallest for source~$1$ and largest for source $3$ due to the same reason as above.    Fig.~\ref{Whittel_pthesh}(a)-(b)  also show that   $p_{th}(\cdot,\cdot)$ decreases with $\hat{\mu}$; the intuition is that if a penalty for probing increases, then it is often better to sample and transmit more aggressively in order to avoid wastage of that time slot. Similarly,   $p_{th}(E_i, K_i)$ decreases with $K_i$, which we omit  due to space constraints.

Finally, we examine the effectiveness of the proposed policy by comparing against two other baseline policies: (i) Greedy Maximum Age with retransmission (GMA-R) allowed and (ii) Greedy maximum energy with retransmission (GME-R) allowed.   Fig.~\ref{Comp_plot_wae} shows that our proposed WITS3 algorithm has a lower cumulative average age (averaged over sources as well as over multiple sample runs) as compared to the greedy policies GMA-R and GME-R which probe the channel for the source having the largest age and energy respectively, and sample the selected source until either transmission succeeds or the source energy is completely depleted.

\begin{figure}[htb!]
  \begin{center}
 \subfloat[]{\includegraphics[height=3.8cm,width=8cm]{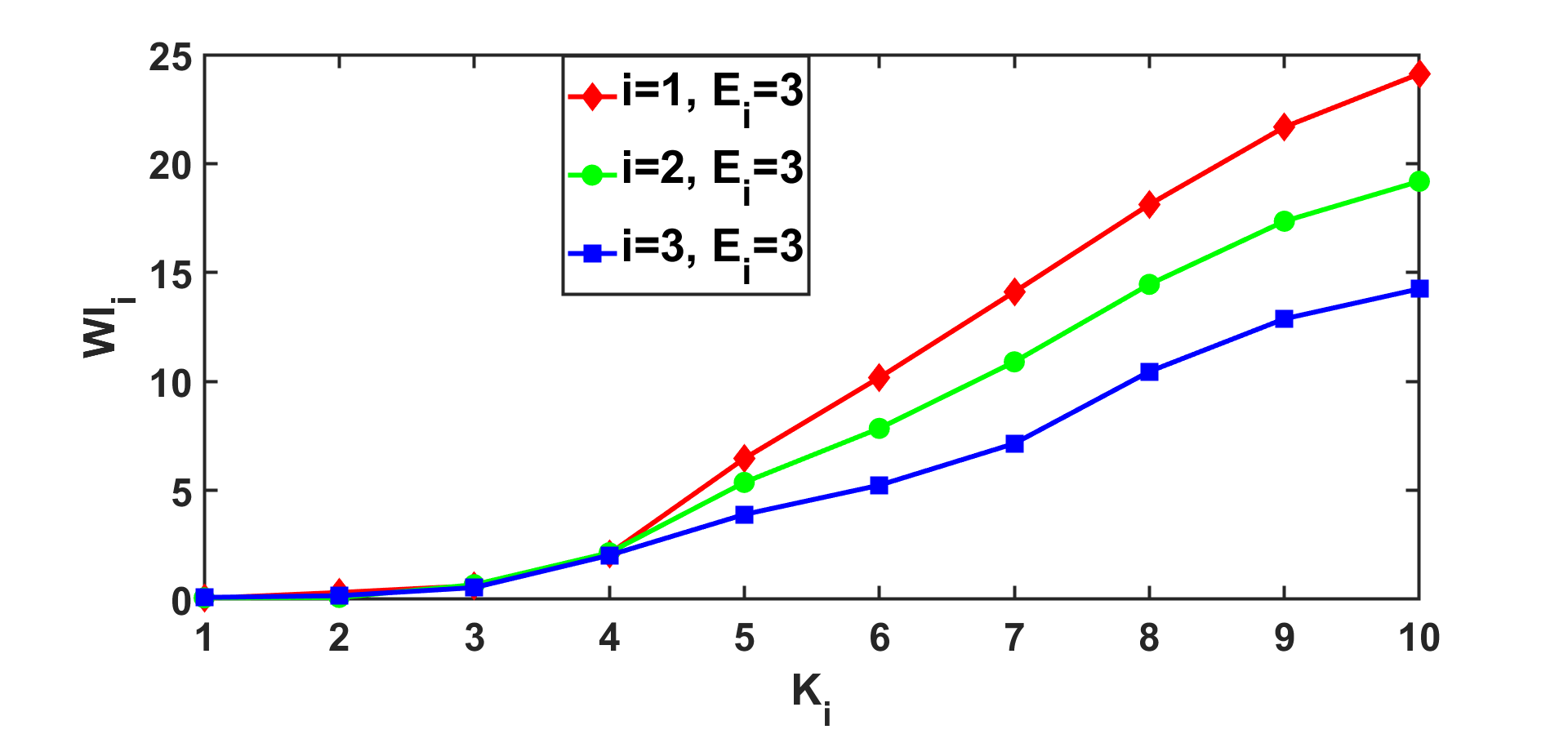}}
  % \vspace{-2pt}
   \hfill
  \subfloat[]{\includegraphics[height=3.8cm,width=8cm]{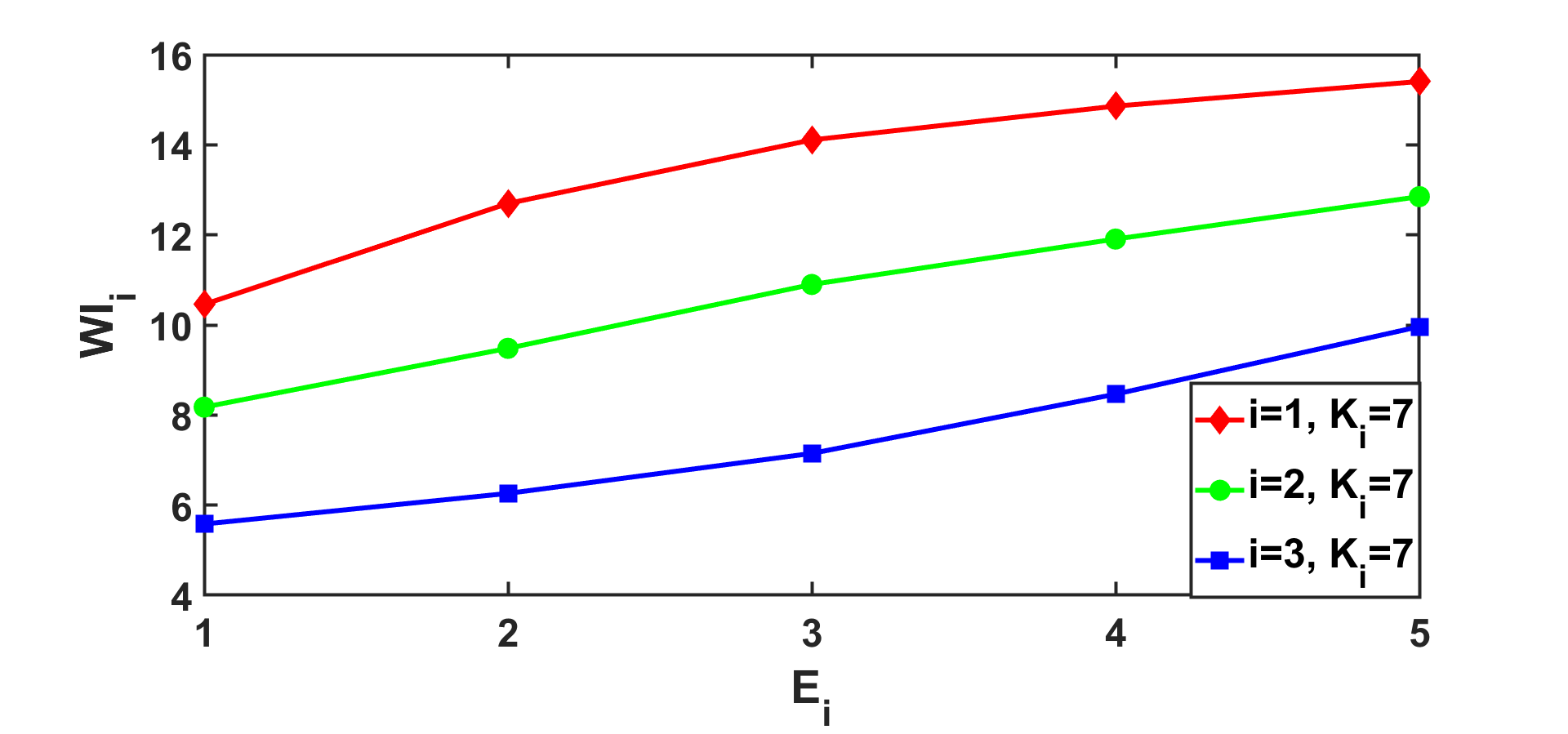}}
 \caption{Multiple sources ($N=3$): (a) Variation of $\mathcal{WI}_i$ with $K_i$ for a fixed $E_i$ and (b) Variation of $\mathcal{WI}_i$ with $E_i$ for a fixed $K_i$ where $i \in \{1,2,3\}$.}
 \label{Whittle_index_inc}
 \end{center}
 \vspace{-0.5cm}
 \end{figure}

\begin{figure}[t!]
  \begin{center}
 \subfloat[]{\includegraphics[height=3.8cm,width=8cm]{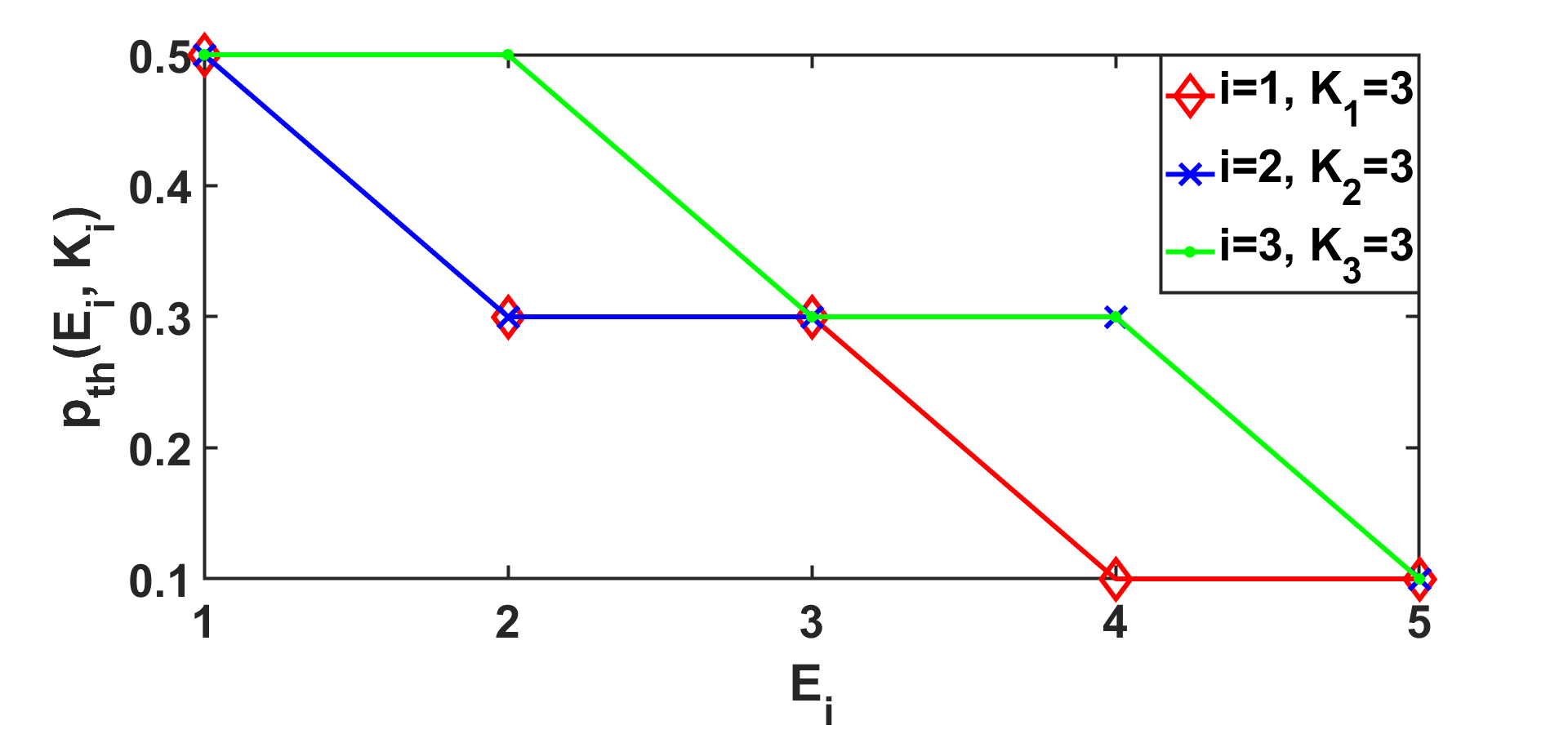}}
 %\vspace{-2pt}
 \hfill
  \subfloat[]{\includegraphics[height=3.8cm,width=8cm]{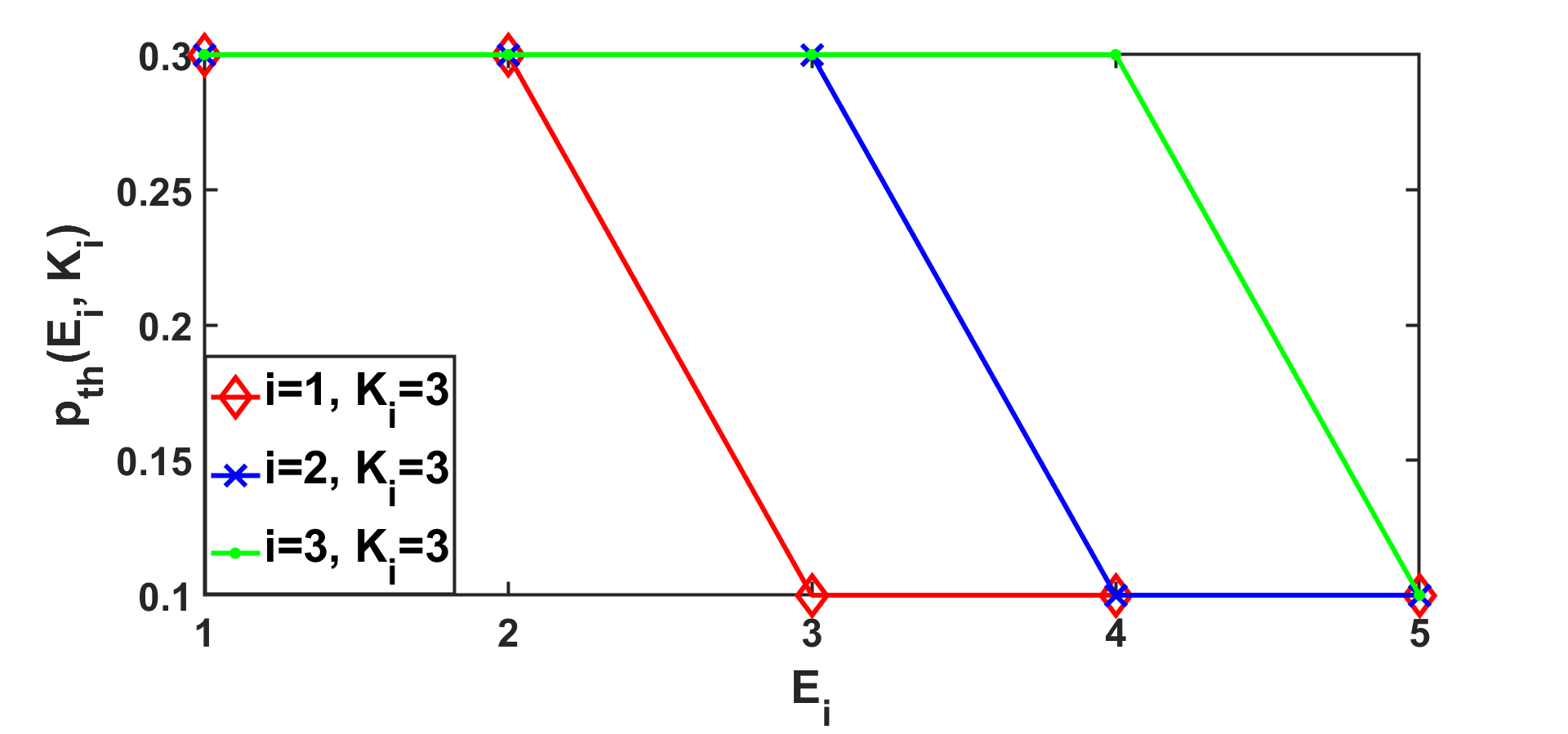}}
 \caption{Variation of $p_{th}(E_i,K_i)$ with $E_i$: (a)  $\hat{\mu}= 2 $, (b)   $\hat{\mu}= 4 $.  }
 \label{Whittel_pthesh}
 \end{center}
 \vspace{-15pt}
 \end{figure}

\begin{figure}[t!]
  \begin{center}
 \includegraphics[height=3.8cm,width=8cm]{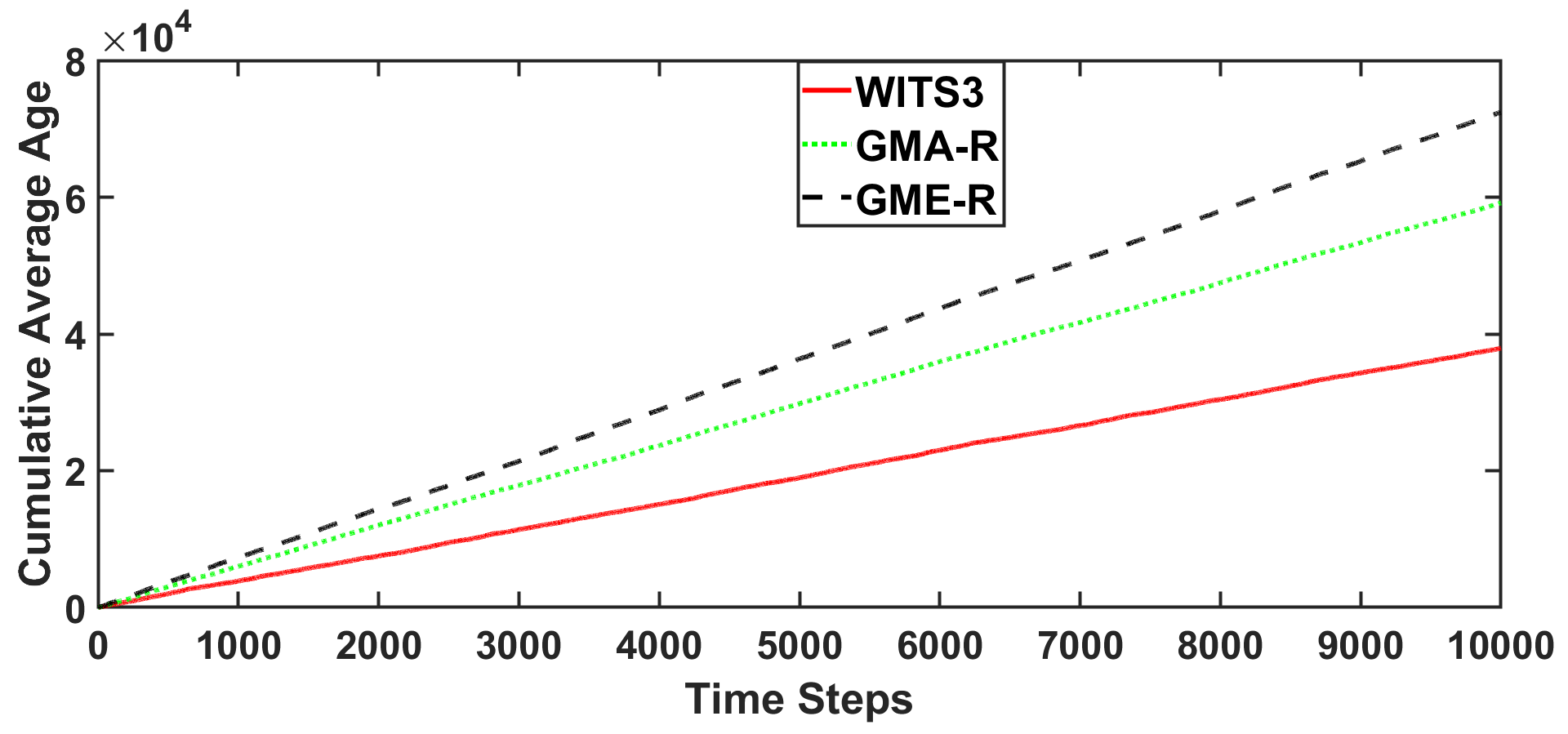}
 \caption{Comparison among WITS3, GMA-R and GME-R.}
 \label{Comp_plot_wae}
 \end{center}
\vspace{-18pt}
\end{figure}

\subsection{Q-WITS3 for unknown system parameters}\label{subsection:MS_USP_Simulation results}
Under the same system parameters as in Section~\ref{subsection:MS_KSP_Simulation results}, we evaluate the performance of Q-WITS3  averaged over five   sample paths, and compare against  the WITS3 policy. It is observed from Fig.~\ref{multi_avg_age_comp_plot} that the time-averaged AoI of   Q-WITS3   converges to that of WITS3, i.e., Q-WITS3 learns the WITS3 policy efficiently. 

\begin{figure}[t!]
  \begin{center}
 \includegraphics[height=4cm,width=8.9cm]{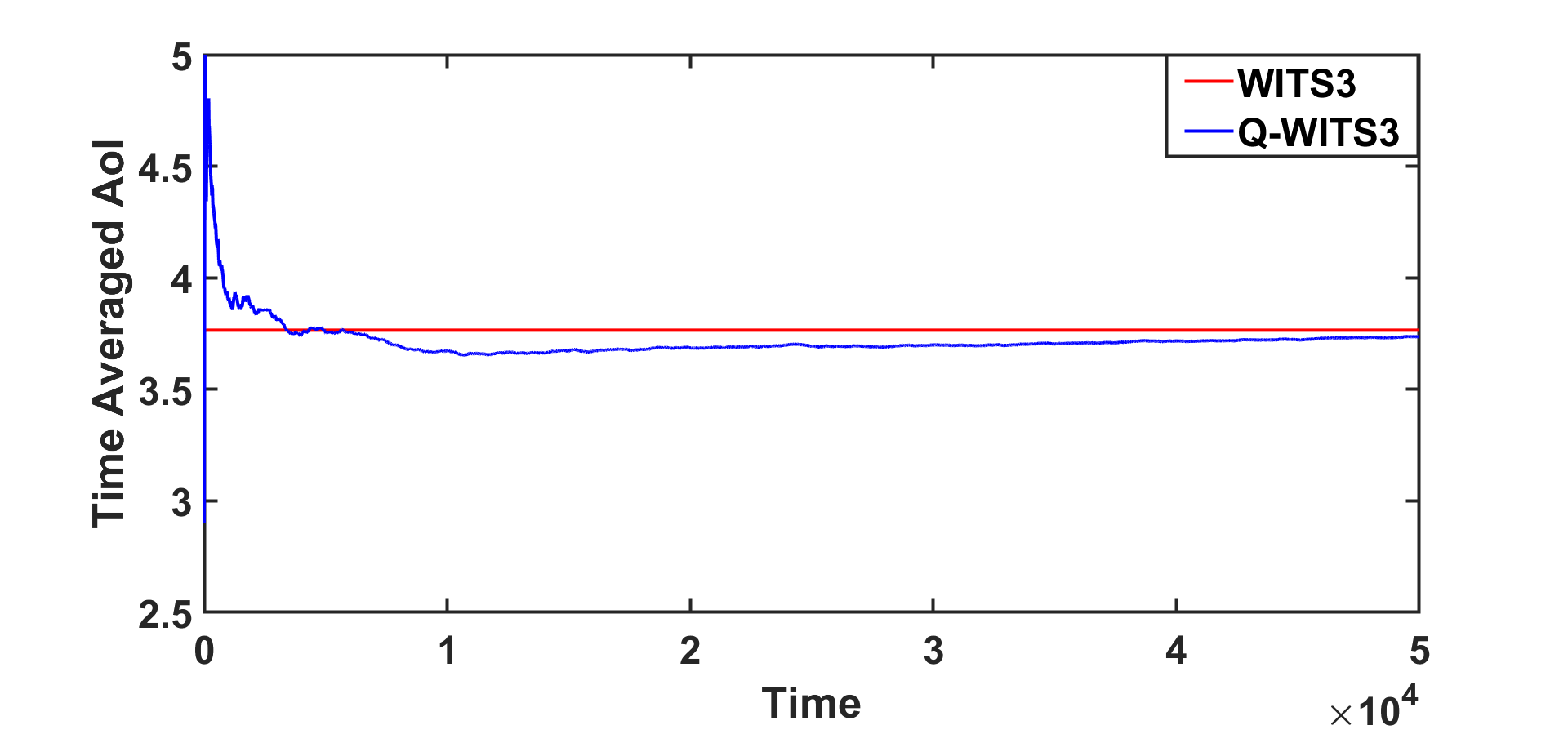}
 \caption{Time performance comparison among proposed Q-WITS3 algorithm, the optimal WITS3 policy, and random policy. }
 \label{multi_avg_age_comp_plot}
 \end{center}
\vspace{-18pt}
\end{figure}

% \vspace{-12pt}
 \section{Conclusion}\label{section:Conclusion}
 %\vspace{-3pt}
We, in this paper, have considered the AoI minimization problem for multiple EH source system where  EH sources send their status updates to a destination node over a common wireless medium. Under a source probing constraint,
we have proposed a Whittle's index and threshold based source scheduling and sampling policy (WITS3 policy)
which uses Whittle’s index for selecting a node for source probing, and further uses a channel quality threshold based rule for sampling and communication. Also, we have proposed a novel Q-WITS3 algorithm that computes the optimal source scheduling and sampling policy for the unknown environment case. Numerical results demonstrate the superiority of WITS3 and also demonstrate that the Q-WITS3 algorithm asymptotically learns the same policy as returned by WITS3. In future, we will extend this work to the multi-sink, multi-hop setting.

% 
% \vspace{-1cm}
% \begin{IEEEbiography}
% [{\includegraphics[width=1in,height=1in,clip,keepaspectratio]{arpan.png}}]
% {Arpan 
% Chattopadhyay} obtained his B.E. in Electronics and Telecommunication  from Jadavpur University, 
% Kolkata, India in 2008, and M.E. and Ph.D in Telecommunication  from Indian Institute of Science, 
% Bangalore, India, in  2010 and 2015, respectively. He is currently working as an assistant professor in the Electrical Engineering department, IIT Delhi. Previously, he held postdoc positions in the Electrical Engineering department, University of Southern California, Los Angeles, USA, and INRIA/ENS Paris, France. 
% His research interests include wireless networks, IoT, cyber-physical systems, networked estimation and control. \end{IEEEbiography}
% 
% \vspace{-1cm}
% \begin{IEEEbiography}
% [{\includegraphics[width=1in,height=1in,clip,keepaspectratio]{ubli.jpg}}]{Urbashi Mitra} is a Professor in the Departments of Electrical
% Engineering and Computer Science, University of Southern California. Previous appointments include Bellcore and the Ohio State
% University. Her honors include: U.S.-U.K. Fulbright, Leverhulme Trust Visiting
% Professorship, Editor-in-Chief IEEE TRANSACTIONS ON MOLECULAR, BIOLOGICAL
% AND MULTISCALE COMMUNICATIONS, IEEE Communications Society
% Distinguished Lecturer, U.S. NAE Galbreth Lectureship, Okawa Foundation
% Award, and an NSF CAREER Award. Her research is in wireless
% communications. \end{IEEEbiography}

\vspace{0mm} 
\bibliographystyle{IEEEtran-mod}
% argument is your BibTeX string definitions and bibliography database(s)
\bibliography{ref_paper.bib}

% Generated by IEEEtran.bst, version: 1.14 (2015/08/26)
\begin{thebibliography}{10}
\providecommand{\url}[1]{#1}
\csname url@samestyle\endcsname
\providecommand{\newblock}{\relax}
\providecommand{\bibinfo}[2]{#2}
\providecommand{\BIBentrySTDinterwordspacing}{\spaceskip=0pt\relax}
\providecommand{\BIBentryALTinterwordstretchfactor}{4}
\providecommand{\BIBentryALTinterwordspacing}{\spaceskip=\fontdimen2\font plus
\BIBentryALTinterwordstretchfactor\fontdimen3\font minus
  \fontdimen4\font\relax}
\providecommand{\BIBforeignlanguage}[2]{{%
\expandafter\ifx\csname l@#1\endcsname\relax
\typeout{** WARNING: IEEEtran.bst: No hyphenation pattern has been}%
\typeout{** loaded for the language `#1'. Using the pattern for}%
\typeout{** the default language instead.}%
\else
\language=\csname l@#1\endcsname
\fi
#2}}
\providecommand{\BIBdecl}{\relax}
\BIBdecl

\bibitem{kaul2012real}
S.~Kaul, R.~Yates, and M.~Gruteser, ``Real-time status: How often should one
  update?'' in \emph{2012 Proceedings IEEE INFOCOM}.\hskip 1em plus 0.5em minus
  0.4em\relax IEEE, 2012, pp. 2731--2735.

\bibitem{bacinoglu2018achieving}
B.~T. Bacinoglu, Y.~Sun, E.~Uysal-Bivikoglu, and V.~Mutlu, ``Achieving the
  age-energy tradeoff with a finite-battery energy harvesting source,'' in
  \emph{2018 IEEE International Symposium on Information Theory (ISIT)}.\hskip
  1em plus 0.5em minus 0.4em\relax IEEE, 2018, pp. 876--880.

\bibitem{ceran2021learning}
E.~T. Ceran, D.~Gunduz, and A.~Gyorgy, ``Learning to minimize age of
  information over an unreliable channel with energy harvesting,'' \emph{arXiv
  preprint arXiv:2106.16037}, 2021.

\bibitem{abd2020aoi}
M.~A. Abd-Elmagid, H.~S. Dhillon, and N.~Pappas, ``Aoi-optimal joint sampling
  and updating for wireless powered communication systems,'' \emph{IEEE
  Transactions on Vehicular Technology}, vol.~69, no.~11, pp. 14\,110--14\,115,
  2020.

\bibitem{jaiswal2021minimization}
A.~Jaiswal and A.~Chattopadhyay, ``Minimization of age-of-information in remote
  sensing with energy harvesting,'' in \emph{2021 IEEE International Symposium
  on Information Theory (ISIT)}.\hskip 1em plus 0.5em minus 0.4em\relax IEEE,
  2021, pp. 3249--3254.

\bibitem{jaiswal2023age}
A.~Jaiswal and A.~Chattopadhyay, ``Age-of-information minimization for energy
  harvesting sensor in non-stationary environment,'' in \emph{2023 21st
  International Symposium on Modeling and Optimization in Mobile, Ad Hoc, and
  Wireless Networks (WiOpt)}.\hskip 1em plus 0.5em minus 0.4em\relax IEEE,
  2023, pp. 1--8.

\bibitem{jaiswal2024age}
A.~Jaiswal, A.~Chattopadhyay, and A.~Varma, ``Age-of-information minimization
  via opportunistic sampling by an energy harvesting source,'' \emph{IEEE
  Transactions on Cognitive Communications and Networking}, 2024.

\bibitem{zakeri2023minimizing}
A.~Zakeri, M.~Moltafet, M.~Leinonen, and M.~Codreanu, ``Minimizing the aoi in
  resource-constrained multi-source relaying systems: Dynamic and
  learning-based scheduling,'' \emph{IEEE Transactions on Wireless
  Communications}, 2023.

\bibitem{wang2024scheduling}
H.~Wang, Y.~Wang, X.~Xie, and M.~Li, ``A scheduling scheme for minimizing age
  under delay tolerance in iot systems with heterogeneous traffic,'' \emph{IEEE
  Internet of Things Journal}, 2024.

\bibitem{bedewy2019age}
A.~M. Bedewy, Y.~Sun, S.~Kompella, and N.~B. Shroff, ``Age-optimal sampling and
  transmission scheduling in multi-source systems,'' in \emph{Proceedings of
  the Twentieth ACM International Symposium on Mobile Ad Hoc Networking and
  Computing}, 2019, pp. 121--130.

\bibitem{xie2021reinforcement}
X.~Xie, H.~Wang, and M.~Weng, ``A reinforcement learning approach for
  optimizing the age-of-computing-enabled iot,'' \emph{IEEE Internet of Things
  Journal}, vol.~9, no.~4, pp. 2778--2786, 2021.

\bibitem{zhou2019joint}
B.~Zhou and W.~Saad, ``Joint status sampling and updating for minimizing age of
  information in the internet of things,'' \emph{IEEE Transactions on
  Communications}, vol.~67, no.~11, pp. 7468--7482, 2019.

\bibitem{hatami2022demand}
M.~Hatami, M.~Leinonen, Z.~Chen, N.~Pappas, and M.~Codreanu, ``On-demand aoi
  minimization in resource-constrained cache-enabled iot networks with energy
  harvesting sensors,'' \emph{IEEE Transactions on Communications}, vol.~70,
  no.~11, pp. 7446--7463, 2022.

\bibitem{tang2020minimizing}
H.~Tang, J.~Wang, L.~Song, and J.~Song, ``Minimizing age of information with
  power constraints: Multi-user opportunistic scheduling in multi-state
  time-varying channels,'' \emph{IEEE Journal on Selected Areas in
  Communications}, vol.~38, no.~5, pp. 854--868, 2020.

\bibitem{ceran2021reinforcement}
E.~T. Ceran, D.~G{\"u}nd{\"u}z, and A.~Gy{\"o}rgy, ``A reinforcement learning
  approach to age of information in multi-user networks with harq,'' \emph{IEEE
  Journal on Selected Areas in Communications}, vol.~39, no.~5, pp. 1412--1426,
  2021.

\bibitem{whittle1988restless}
P.~Whittle, ``Restless bandits: Activity allocation in a changing world,''
  \emph{Journal of applied probability}, vol.~25, no.~A, pp. 287--298, 1988.

\bibitem{kadota2016minimizing}
I.~Kadota, E.~Uysal-Biyikoglu, R.~Singh, and E.~Modiano, ``Minimizing the age
  of information in broadcast wireless networks,'' in \emph{2016 54th Annual
  Allerton Conference on Communication, Control, and Computing
  (Allerton)}.\hskip 1em plus 0.5em minus 0.4em\relax IEEE, 2016, pp. 844--851.

\bibitem{sun2019closed}
J.~Sun, Z.~Jiang, B.~Krishnamachari, S.~Zhou, and Z.~Niu, ``Closed-form
  whittle’s index-enabled random access for timely status update,''
  \emph{IEEE Transactions on Communications}, vol.~68, no.~3, pp. 1538--1551,
  2019.

\bibitem{hsu2018age}
Y.-P. Hsu, ``Age of information: Whittle index for scheduling stochastic
  arrivals,'' in \emph{2018 IEEE International Symposium on Information Theory
  (ISIT)}.\hskip 1em plus 0.5em minus 0.4em\relax IEEE, 2018, pp. 2634--2638.

\bibitem{tang2021whittle}
Z.~Tang, Z.~Sun, N.~Yang, and X.~Zhou, ``Whittle index-based scheduling policy
  for minimizing the cost of age of information,'' \emph{IEEE Communications
  Letters}, vol.~26, no.~1, pp. 54--58, 2021.

\bibitem{tong2022age}
J.~Tong, L.~Fu, and Z.~Han, ``Age-of-information oriented scheduling for
  multichannel iot systems with correlated sources,'' \emph{IEEE Transactions
  on Wireless Communications}, vol.~21, no.~11, pp. 9775--9790, 2022.

\bibitem{kriouile2021global}
S.~Kriouile, M.~Assaad, and A.~Maatouk, ``On the global optimality of
  whittle’s index policy for minimizing the age of information,'' \emph{IEEE
  Transactions on Information Theory}, vol.~68, no.~1, pp. 572--600, 2021.

\bibitem{tripathi2019whittle}
V.~Tripathi and E.~Modiano, ``A whittle index approach to minimizing functions
  of age of information,'' in \emph{2019 57th Annual Allerton Conference on
  Communication, Control, and Computing (Allerton)}.\hskip 1em plus 0.5em minus
  0.4em\relax IEEE, 2019, pp. 1160--1167.

\bibitem{wang2023optimistic}
K.~Wang, L.~Xu, A.~Taneja, and M.~Tambe, ``Optimistic whittle index policy:
  Online learning for restless bandits,'' in \emph{Proceedings of the AAAI
  Conference on Artificial Intelligence}, vol.~37, no.~8, 2023, pp.
  10\,131--10\,139.

\bibitem{mao2020model}
W.~Mao, K.~Zhang, R.~Zhu, D.~Simchi-Levi, and T.~Ba{\c{s}}ar, ``Model-free
  non-stationary rl: Near-optimal regret and applications in multi-agent rl and
  inventory control,'' \emph{arXiv preprint arXiv:2010.03161}, 2020.

\bibitem{akbarzadeh2023learning}
N.~Akbarzadeh and A.~Mahajan, ``On learning whittle index policy for restless
  bandits with scalable regret,'' \emph{IEEE Transactions on Control of Network
  Systems}, 2023.

\bibitem{xiong2022learning}
G.~Xiong, S.~Wang, and J.~Li, ``Learning infinite-horizon average-reward
  restless multi-action bandits via index awareness,'' \emph{Advances in Neural
  Information Processing Systems}, vol.~35, pp. 17\,911--17\,925, 2022.

\bibitem{fu2019towards}
J.~Fu, Y.~Nazarathy, S.~Moka, and P.~G. Taylor, ``Towards q-learning the
  whittle index for restless bandits,'' in \emph{2019 Australian \& New Zealand
  Control Conference (ANZCC)}.\hskip 1em plus 0.5em minus 0.4em\relax IEEE,
  2019, pp. 249--254.

\bibitem{DBLP:journals/corr/abs-2004-14427}
\BIBentryALTinterwordspacing
K.~Avrachenkov and V.~S. Borkar, ``Whittle index based q-learning for restless
  bandits with average reward,'' \emph{CoRR}, vol. abs/2004.14427, 2020.
  [Online]. Available: \url{https://arxiv.org/abs/2004.14427}
\BIBentrySTDinterwordspacing

\bibitem{killian2021q}
J.~A. Killian, A.~Biswas, S.~Shah, and M.~Tambe, ``Q-learning lagrange policies
  for multi-action restless bandits,'' in \emph{Proceedings of the 27th ACM
  SIGKDD Conference on Knowledge Discovery \& Data Mining}, 2021, pp. 871--881.

\bibitem{bertsekas2011dynamic}
D.~P. Bertsekas, \emph{Dynamic programming and optimal control 3rd edition,
  volume II}.\hskip 1em plus 0.5em minus 0.4em\relax Belmont, MA: Athena
  Scientific, 2011.

\bibitem{bertsekas2005dynamic}
D.~P. Bertsekas, ``Dynamic programming and optimal control 3rd edition, volume
  i,'' \emph{Belmont, MA: Athena Scientific}, 2005.

\bibitem{bhatnagar2013reinforcement}
S.~Bhatnagar, H.~Prasad, L.~Prashanth, S.~Bhatnagar, H.~Prasad, and
  L.~Prashanth, ``Reinforcement learning,'' \emph{Stochastic Recursive
  Algorithms for Optimization: Simultaneous Perturbation Methods}, pp.
  187--220, 2013.

\bibitem{borkar2009stochastic}
V.~S. Borkar, \emph{Stochastic approximation: a dynamical systems
  viewpoint}.\hskip 1em plus 0.5em minus 0.4em\relax Springer, 2009, vol.~48.

\bibitem{borkar1998asynchronous}
V.~S. Borkar, ``Asynchronous stochastic approximations,'' \emph{SIAM Journal on
  Control and Optimization}, vol.~36, no.~3, pp. 840--851, 1998.

\bibitem{bhatnagar2011borkar}
S.~Bhatnagar, ``The borkar--meyn theorem for asynchronous stochastic
  approximations,'' \emph{Systems \& control letters}, vol.~60, no.~7, pp.
  472--478, 2011.

\end{thebibliography}
%\printbibliography
%\vfill

%\appendices

\end{document}